\documentclass[final,3p,times,twocolumn]{elsarticle}
\usepackage[english]{babel}
\usepackage{amsmath,amsthm,amssymb}
\usepackage{tikz}
\usepackage{tikz-cd}
\usepackage{enumitem}
\usepackage{mathrsfs}
\usepackage{dirtytalk}
\usepackage{multirow}
\usepackage{etoolbox}
\usepackage{ragged2e}
\usepackage{empheq}
\usepackage{subcaption}

\newtheoremstyle{theorem}
                {4pt}
                {4pt}
                {}
                {}
                {\bfseries}
                {\\}
                {10pt}
                {}

\theoremstyle{theorem}
\newtheorem{definition}{Definition}
\newtheorem{assumption}{Assumption}
\newtheorem{theorem}{Theorem}

\newtheorem{lemma}{Lemma}

\newtheorem{remark}{Remark}
\renewcommand{\qedsymbol}{$\square$}

\journal{Systems and Control Letters}
\begin{document}
	\justifying 
	\begin{frontmatter}
		\title{Towards Polynomial Immersion of Port-Hamiltonian Systems}
		\author{Mohammad Itani$\hspace{0.2mm}^a$} 
        \author{Manuel Schaller$\hspace{0.4mm}^b$}
        \author{Karl Worthmann$\hspace{0.2mm}^c$}
        \author{Timm Faulwasser$\hspace{0.4mm}^a$}
        \affiliation{organization={Hamburg University of Technology, Institute of Control Systems},city={Hamburg},country={Germany}}
        \affiliation{organization={Chemnitz University of Technology},city={Chemnitz},country={Germany}}
		\affiliation{organization={Ilmenau University of Technology},city={Ilmenau},country={Germany}}
        %----------------------------------------------------
		%% Abstract
		\begin{abstract}
            Port-Hamiltonian (pH) systems offer a highly structured and energy-based modular framework for control systems.
			Many pH systems exhibit non-polynomial non-linearities. We consider the problem of immersing such systems into a higher-dimensional polynomial representation.            
            We prove that, along system trajectories, important features of the non-polynomial pH system are preserved such as the internal interconnection geometry, the energy balance relation with passivity supply rate, as well as energy dissipation. We illustrate how the lifted system enables the design of stabilizing feedback laws by combining sum-of-squares optimization with concepts from passivity-based control.  We draw upon several examples to illustrate our findings. 
		\end{abstract}
	%----------------------------------------------------	
		%% Keywords
		\begin{keyword}
			 Immersion \sep lifting \sep polynomial \sep \sep port-Hamiltonian systems \sep dissipativity \sep passivity
		\end{keyword}
	\end{frontmatter}
    %----------------------------------------------------
    \begingroup
    \renewcommand{\thefootnote}{}
    \footnotetext{The authors acknowledge support by the Deutsche Forschungsgemeinschaft -- DFG  (project number 519323897).}
    \endgroup
    %----------------------------------------------------
	\section{Introduction} \label{section:intro}
    While control design and stability analysis are challenging for general nonlinear control systems, constructive methods are available in the case of polynomial nonlinearities, see, e.g., \cite{parrilo2000structured,papachristodoulou2002,prajna2004}. 
    For systems whose dynamics are non-polynomial, recasting techniques such as lifting and immersion can be applied to obtain a polynomial representation. Lifting refers to the extension of the original system state with auxiliary coordinates such that the lifted dynamics are described by polynomials \cite{kramer2025discovering, gu2011qlmor, Papachristodoulou2005}. 
    Lifting always leads to additional nonlinear algebraic constraints that ensure correspondence to the original system. 
    When only the input-output behavior is of interest, a nonlinear system may also be immersed into another system usually of higher dimension \cite{fliess1983finiteness,ohtsuka2005model}. In this case, if the initial condition is consistently initialized, both systems admit the same input-output behavior without any additional algebraic constraints. 

    In this paper, we propose a lifting approach for port-Hamiltonian (pH) systems~\cite{duindam2009modeling,van2014port,maschke1992port}. Many physical systems can be modeled as nonlinear port-Hamiltonian systems, which reflect physical properties such as the interconnection geometry as well as energy dissipation and energy storage. For more recent works on the topic, see, e.g., \cite{philipp2024optimal,esterhuizen2024existence}.
    
    When applied to port-Hamiltonian systems, recasting techniques may destroy or obscure their favorable structural properties such as dissipativity or the conjugated input-output structure. Hence, we investigate a novel technique that combines conceptual ideas from lifting and immersion (hence called lifted immersion) which preserves the pH structure with the same rational Hamiltonian as the original system as well as the internal interconnection structure (modulo a trivial lifting to a higher dimension). The \textit{target} system obtained by the proposed method admits a polynomial structure, i.e., all functions defining the dynamics and outputs of the system are polynomials in the lifted state. While the external interconnection structure of the resulting system contains additional non-trivial terms, the new output is identical to the original output and admits the usual conjugated port structure encountered in pH systems. 
    We show that the dissipation properties of the original system hold along the new system's trajectories if the initial condition is consistently initialized. Thus, passivity of the resulting polynomial system with respect to the original Hamiltonian follows on the range of the lifted immersion which defines an invariant embedded submanifold in the lifted state space.
    
    Finally, we illustrate via an example the potential of polynomial pH system structures for control design combining interconnection and damping assignment \cite{ortega2002interconnection} with sum-of-squares optimization \cite{parrilo2000structured, papachristodoulou2002,Papachristodoulou2005}.

    The remainder of the paper is organized as follows: 
    Section \ref{sec:preliminaries} introduces the problem statement and provides an overview on immersion notions. 
    Section \ref{sec:main_results} presents our approach of constructing a structure-preserving polynomial immersion for pH systems with two tutorial examples. In Section \ref{sec:ida_sos}, we illustrate, via an example, how polynomial immersions enable the design of stabilizing controllers via sum-of-squares techniques. Finally, conclusions and outlook are presented in Section \ref{sec:conclusion}.
%-----------------------------------------------------------------	

	\paragraph{Notation}
    A subscript for the sets $\mathbb{N}$ and $\mathbb{Z}$ denotes an upper bound, e.g., $\mathbb{N}_m := \{1,\ldots,m\}$. The sets $\mathbb{R}^+$ and $\mathbb{N}^0$, respectively, denote the set of nonnegative real numbers and the set $\mathbb{N} \cup \{0\}$. $\mathbb{R}[x], \, x \in \mathbb{R}^n,$ denotes the ring of polynomials (over $\mathbb{R}$) with variables $x_1,\ldots,x_n$ and is $\mathbb{R}(x)$ the corresponding field of rational functions. For a smooth function $\alpha: \mathbb{R}^n \rightarrow \mathbb{R}^r$, $\mathbb{R}[\alpha(x)]$ ($\mathbb{R}(\alpha(x))$) denotes the ring of polynomials whose generators are $\alpha_1(x), \ldots, \alpha_r(x)$ (the corresponding rational field). $\mathbb{R}[\cdot]^{n}$, respectively, $\mathbb{R}[\cdot]^{n\times m}$  denote $n$-dimensional vectors and $n \times m$-dimensional matrices whose elements are polynomials in $(\cdot)$. The same holds for $\mathbb{R}(\cdot)^{n}$ and $\mathbb{R}(\cdot)^{n\times m}$. Let $H: \mathbb{R}^n \rightarrow \mathbb{R}$ be a differentiable function, then $\nabla H: \mathbb{R}^n \rightarrow \mathbb{R}^n, \, x \mapsto \left[\frac{\partial H(x)}{\partial{x}_1} \; \ldots \, \frac{\partial H(x)}{\partial{x}_n}\right]^\top$, is its gradient. For a differentiable vector-valued map, $\Psi: \mathbb{R}^n \rightarrow \mathbb{R}^{\tilde{n}}$, its Jacobian is the $\tilde{n}\times n$ matrix 
    \(
        \operatorname{D}\Psi(x) := 
        \operatorname{col}(\nabla \Psi_1^\top(x),\ldots, \nabla \Psi_{\tilde{n}}^\top(x))
    \).
%----------------------------------------------------------------------
	\section{Preliminaries and Problem Statement}\label{sec:preliminaries}
    In this section, we present the problem formulation, then we introduce the concept of system immersion.
    
    \subsection{Problem Statement}
    \label{subsec:problem_statement}
	We consider nonlinear port-Hamiltonian systems
    \vspace{-0.2cm}
    \begin{subequations} \label{eqn:pH_system} 
	\begin{align} 
				\dot{x} &= \left( J(x)- R(x) \right) \nabla H(x) + \sum\limits_{i=1}^{m} g_i(x) u_i, \; x(t_0) = x_0 
                \label{eqn:ODE_pH}
                \\
				y &= \operatorname{col}(g_1^\top(x),\ldots,g_m^\top(x))\nabla H(x) \label{eqn:ODE_output}, 
	\end{align} 
        \end{subequations}
	where $x \in \mathcal{D}, \; \mathcal{D} \subseteq \mathbb{R}^n$ is an open set, and $u, y \in \mathbb{R}^m$. 
    The Hamiltonian $H: \mathbb{R}^{n} \rightarrow\mathbb{R}^{+}$ captures the total energy of the system, and the matrices $J(x), \, R(x) \in \mathbb{R}^{n \times n}$, respectively, model its internal interconnection and dissipation structure. For all $x \in \mathcal{D}$, they satisfy 
    \[
    J^\top(x) = -J(x) \quad\text{ and }\quad R(x)=R^\top(x) \succeq0.
    \]
    Moreover, $g_i:\mathcal{D} \rightarrow \mathbb{R}^n, \, i \in \mathbb{N}_{m}$, are the control vector fields. We consider piecewise continuous inputs functions denoted by $\mathcal{U}= \mathcal{PC}([t_0,t_f],\mathbb{R}^m)$. We assume that all functions on the right-hand side of~\eqref{eqn:pH_system} are real-analytic on $\mathcal{D}$. A function $f:\mathcal{D}\rightarrow \mathbb{R}$ is called real-analytic on some open set $\mathcal{D} \subseteq \mathbb{R}^n$ if its Taylor series at $z_0$ (for every $z_0 \in \mathcal{D}$) converges to $f(z)$ in a neighborhood of $z_0$. If $f$ is vector-valued, then it is real-analytic if each of its components is real-analytic. It is worth noting that real-analytic functions are smooth functions; see \cite{krantz2002primer} for more details on the topic. Henceforth, we refer to real-analytic functions as analytic functions.
    
	In this work, we investigate the following questions: 
    \begin{itemize}
        \item[i)] \emph{First, is it possible to find a polynomial representation of \eqref{eqn:pH_system} without sacrificing its port-Hamiltonian structure?}
        \item[ii)] \emph{Second, how does one leverage the obtained polynomial structure for control design?}
    \end{itemize}
    
    \subsection{Immersions}
    \label{subsec:immersion}
    We first recall the concept of immersion, which will be the central ingredient for our lifting approach. 
    Consider nonlinear control-affine systems of the form
	\begin{subequations} \label{eqn:affine_control_sys}  
    \vspace{-0.2cm}
    \begin{align}
            \dot{x} &= g_0(x) + \sum\limits_{i=1}^{m} g_i(x) u_i, \quad x(t_0) = x_0 
            \label{eqn:affine_ODE}
            \\
			y &= h(x), 
        \end{align}
	\end{subequations}
    where $x \in \mathcal{D}$ (an open set in $\mathbb{R}^n$) and all functions are analytic on $\mathcal{D}$.
    
    Note that the pH system \eqref{eqn:pH_system} is a special case of \eqref{eqn:affine_control_sys}.
    
    Due to the analyticity assumed above, for any $x_0 \in \mathcal{D}$ and $u(\cdot) \in \mathcal{U}$, there always exists (locally) a unique maximal solution for \eqref{eqn:affine_ODE} denoted by $x(\cdot, x_0,u) = x(t, x_0,u), \, t \in [t_0,t_m]$, or briefly as $x(\cdot)$.
    
\begin{definition}[Immersibility \cite{fliess1983finiteness,ohtsuka2005model}]   
\label{def:immersion}
      Consider a system of the form 
    \vspace{-0.2cm}
    \begin{subequations}
        \label{eqn:imm}
        \begin{align}
        \dot{\tilde{x}} &= \tilde{g}_0(\tilde{x}) + \sum\limits_{i=1}^{m} \tilde{g}_i(\tilde{x}) u_i, 
        \quad \tilde{x}(t_0) = \tilde{x}_0 \label{eqn:poly_imm_dyn} \\
        \tilde{y} &= \tilde{h}(\tilde{x}), \label{eqn:poly_imm_out}
        \end{align}
    \end{subequations}
    where $\widetilde{\mathcal{D}}$ is an open set of $\mathbb{R}^{\tilde{n}}$, $\tilde{x} \in \widetilde{\mathcal{D}}$, and the functions $\tilde{g}_0, \ldots,\tilde{g}_m,\tilde{h}$ are analytic on $\widetilde{\mathcal{D}}$.

    System \eqref{eqn:affine_control_sys} is said to be immersible into \eqref{eqn:imm}, if there exists an analytic map $\Psi: \mathcal{D} \subseteq \mathbb{R}^{n} \to \mathbb{R}^{\tilde{n}}$, such that $\Psi(\mathcal{D}) \subseteq \widetilde{\mathcal{D}}$ and, for every $x_0 \in \mathcal{D}$, we have
    \[
        h \circ x(\cdot,x_0,u) = \tilde{h} \circ \tilde{x}(\cdot,\Psi(x_0),u).
    \] 
    Moreover, $\Psi$ is called an \textit{immersion of \eqref{eqn:affine_control_sys} into \eqref{eqn:imm} on $\mathcal{D}$}. \qed
\end{definition}  

\begin{figure*}[htbp]
        \centering
        \includegraphics[scale=0.75]{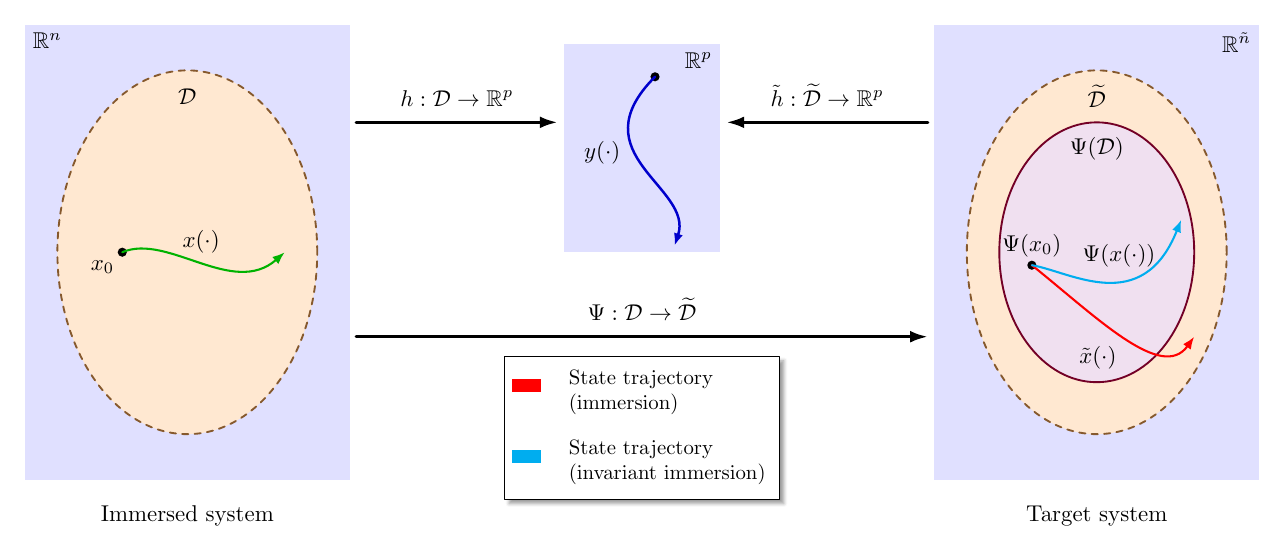}
        \caption{Illustration of the difference between an immersion and an invariant immersion.}
        \label{fig:comparison_immersion_inv}
\end{figure*}

    Immersions as per Definition~\ref{def:immersion} map the initial conditions of system \eqref{eqn:affine_control_sys} to initial conditions of \eqref{eqn:imm} such that the output trajectories of both systems are identical. Notice that the respective vector fields, as well as the output maps and state trajectories of \eqref{eqn:affine_control_sys} and \eqref{eqn:imm}, are not necessarily directly related via the immersion $\Psi$. This is in contrast to invariant immersions defined next.

    \begin{definition}[Invariant immersibility \cite{ohtsuka2005model}]
        \label{def:inv_imm}
        Consider a system \eqref{eqn:imm} and an analytic immersion $\Psi$ on $\mathcal{D}$ satisfying Definition \ref{def:immersion}. The immersion $\Psi$ is called \textit{invariant} if (for all $x \in \mathcal{D}$) the vector fields of \eqref{eqn:imm} satisfy 
        \[
            \tilde{g}_j(\tilde{x}) = \operatorname{D}\Psi(x)g_j(x), \, \forall j \in \mathbb{N}_m^0,
        \]
        and the output of \eqref{eqn:imm} satisfies $\tilde{h}(\tilde{x}) = h(x)$. \qed
    \end{definition}
    \noindent It follows from Definition \ref{def:inv_imm} that
    \begin{subequations}
        \label{eqn:inv_imm_dyn}
        \begin{align}
        \dot{\tilde{x}}(t) &= \operatorname{D}\Psi(x)\left(g_0(x(t)) + \sum\limits_{i=1}^{m} g_i(x(t)) u_i\right) \\
            & = L_{g_0} \Psi(x) + \sum\limits_{i=1}^{m} L_{g_i} \Psi(x) u_i \\
            &= \frac{d}{dt} \Psi(x).
        \end{align}
    \end{subequations}
    Thus, for a state trajectory $x(\cdot, x_0,u)$ of \eqref{eqn:affine_control_sys}, $\Psi(x(\cdot, x_0,u))$ is a corresponding state trajectory of \eqref{eqn:inv_imm_dyn}. More precisely, if $\tilde{x}(\cdot,\Psi(x_0),u)$ denotes the solution of \eqref{eqn:inv_imm_dyn} with initial condition $\tilde{x}_0 = \Psi(x_0)$ and input $u \in \mathcal{U}$, the following point-wise relation holds   
    \[
        \tilde{x}(t,\Psi(x_0),u) = \Psi(x(t, x_0,u))
    \]
    for all $t$ for which the solution  $x(\cdot)$ exists. 
    This implies that $\Psi(\mathcal{D})$ is an invariant set for \eqref{eqn:inv_imm_dyn} for all $\tilde{x}_0 \in \Psi(\mathcal{D})$ and $u \in \mathcal{U}$ \cite{ohtsuka2005model}. Notice that this property does not necessarily hold for a generic immersion (Definition \ref{def:immersion}) since $\tilde{x}(t) = \Psi(x(t))$ might not necessarily hold for $t>t_0$. Figure \ref{fig:comparison_immersion_inv} illustrates this crucial distinction. It is also worth noting that, for invariant immersions, the corresponding vector fields of the original system \eqref{eqn:affine_control_sys} and the new system \eqref{eqn:imm} are related by taking Lie derivatives of the immersion in the direction of the original system vector fields. Moreover, the output $\tilde{y}$ of \eqref{eqn:imm} is given by the pullback of $\tilde{h}$ by the immersion $\Psi$, i.e., 
    \[
    \tilde{y} = \tilde{h} \circ \Psi(x).
    \]
    %----------------------------------------------
    \begin{remark}[Manifold immersions] 
    Definitions \ref{def:immersion} and \ref{def:inv_imm} concern immersions of dynamic systems. On the other hand, an immersion between smooth manifolds $M$ and $N$ is a smooth map $\varphi \in \mathcal{C}^\infty(M,N)$ such that \cite{tu2011manifolds}
    \begin{equation}
    \label{eqn:rank_cond}
    \forall x \in M \; : \; 
    \operatorname{rank} \dfrac{\partial{\varphi}}{\partial{x}} = \operatorname{dim}M.
    \end{equation}
    Nevertheless, immersions per Definitions \ref{def:immersion} and \ref{def:inv_imm} are also manifold immersions -- between the state spaces of the immersed system and the target system -- if an observability rank condition of \eqref{eqn:affine_control_sys} holds. We refer to \cite{ohtsuka2005model} for details. \qed
    \end{remark}
    %-----------------------------------------------
    Henceforth, we consider a special case of invariant immersions, namely, \emph{differential algebraic} immersions \cite{ohtsuka2005model,ohtsuka2002differentially}. To this end, we recall the notion of a differential field; for further details see \cite{ohtsuka2005model,ohtsuka2002differentially,ritt1950differential}. 
    
    Consider a vector function $\alpha:\mathcal{D} \rightarrow \mathbb{R}^\ell, \; x \mapsto \alpha(x)$, where $\mathcal{D} \subseteq \mathbb{R}^n$ is an open set. Assume $\alpha$ is analytic on $\mathcal{D}$ and denote the set of partial differential operators with respect to $x_1, \ldots, x_n$ by
    \[
        \nabla := \left\{ \partial / \partial{x}_1, \ldots, \partial / \partial{x}_n \right\}.
    \] 
    Then, by $\mathbb{R}\langle \alpha_1(x), \ldots, \alpha_\ell(x) \rangle_\nabla$, or equivalently $\mathbb{R}\langle \alpha(x) \rangle_\nabla$, we denote the (partial) differential field over $\mathbb{R}$ whose generators are $\alpha_1(x), \ldots, \alpha_\ell(x)$. That is, it is the field over $\mathbb{R}$ generated by $\alpha_i(x), \; i \in \mathbb{N}_\ell$, and equipped with the derivation\footnote{A derivation is a function notion that generalizes the standard differentiation notion in Euclidean spaces to algebras over rings.} operators in $\nabla$.  

    \begin{definition}[Differential algebraic immersion \cite{ohtsuka2005model,ohtsuka2002differentially}]
        \label{def:DA_immmersion}
        An invariant immersion $\Psi$ (Definition \ref{def:inv_imm}) is a \textit{differential algebraic (DA) immersion} if there exist a matrix $P$ and analytic functions $\xi_j$ (on $\widetilde{\mathcal{D}}$) such that 
        \[
        P(\tilde{x}) = \operatorname{D}\Psi(x) \in \mathbb{R}(\tilde{x})^{\tilde{n} \times n} \quad\text{ and }\quad \xi_j(\tilde{x})=g_j(x), \, j \in \mathbb{N}_m^0.
        \]
    \end{definition}
		 
    Notice that in Definition \ref{def:DA_immmersion} the Jacobian $\operatorname{D}\Psi(x)$ and the vector fields $g_j(x)$ can be written purely as function of the immersion $\tilde{x}:= \Psi(x)$. Moreover, the entries of the former are purely rational in $\tilde{x}$. Henceforth, we exclusively consider DA immersions. 

    We are primarily interested in obtaining \emph{polynomial} and \emph{rational} representations of systems. The  control-affine system \eqref{eqn:imm} is polynomial if all defining functions are polynomials of the state $x$, i.e.,
    \begin{equation} 
        \label{eqn:poly_sys}
        \tilde{g}_j(\tilde{x}) \in \mathbb{R}[\tilde{x}]^{\tilde{n}}, \, \forall j \in \mathbb{N}_m^0 \; \text{ and } \; \tilde{h}(\tilde{x}) \in \mathbb{R}[\tilde{x}]^p,
    \end{equation}
    where $\mathbb{R}[\cdot]$ is the ring of polynomials over $\mathbb{R}$ with indeterminates $(\cdot)$. 
    Similarly, \eqref{eqn:imm} is called a rational system if
    \begin{equation}
        \label{eqn:rat_sys}
        \tilde{g}_j(\tilde{x}) \in \mathbb{R}(\tilde{x})^{\tilde{n}}, \, \forall j \in \mathbb{N}_m^0 \; \text{ and } \; \tilde{h}(\tilde{x}) \in \mathbb{R}(\tilde{x})^p,
    \end{equation} 
    where $\mathbb{R}(\cdot)$ denotes the field of fractions over $\mathbb{R}$ with indeterminates $(\cdot)$. 

    A necessary and sufficient condition for the existence of a DA immersion of \eqref{eqn:affine_control_sys} into a rational system \eqref{eqn:rat_sys} is that \eqref{eqn:affine_control_sys} is composed solely of \emph{differential algebraic functions} \cite{ohtsuka2005model}. 

    \begin{definition}[Differential algebraic function 
    \label{def:differential_algebraic_fncs}\cite{ohtsuka2005model}]
        A scalar analytic function $\alpha: \mathcal{D} \rightarrow \mathbb{R}$, $x \mapsto \alpha(x)$, where $\mathcal{D} \subseteq \mathbb{R}^n$ is an open set, is called differential algebraic if for every $i \in \mathbb{N}_n$ it satisfies an \emph{algebraic differential equation} of the form
        \begin{equation}
            \label{eqn:ADE}
            p_i\left(x,\alpha(x),\dfrac{\partial{\alpha(x)}}{\partial{x}_i}, \ldots,\dfrac{\partial^{\hspace{0.2ex}\rho_i}\alpha(x)}{\partial{x}_i^{\hspace{0.3ex}\rho_i}}\right)=0, 
        \end{equation}
        where $p_i(\cdot) \in \mathbb{R}[\cdot] \setminus\{0\}, \; \rho_i \in \mathbb{Z}^+$. \qed
    \end{definition}
    %----------------------------------------------------------
    \begin{remark}[Properties of diff. algebraic functions]
        \label{rmk:DA_fncs}
        A single-variable analytic function $\zeta(z)$ is called differential algebraic if it satisfies an \emph{algebraic differential equation} of the form $p(z, \zeta, \zeta',\ldots, \zeta^{(\rho)})=0$ where $p(\cdot) \in \mathbb{R}[\cdot] \setminus \{0\}$ and $\rho \in \mathbb{Z}^+$. In the multi-variable case, the function must satisfy a family of algebraic differential equations where each variable is treated once at a time while the others are held constant \cite{rubel1985differentially}. This fact is formalized in Definition \ref{def:differential_algebraic_fncs}. For vector functions, a function is said to be differential algebraic if each constituting scalar function is differential algebraic per Definition \ref{def:differential_algebraic_fncs}. Generally speaking, most functions that appear in the modeling of physical processes are differential algebraic. For example, polynomials, rational functions, trigonometric functions, exponential functions, and logarithmic functions are all differential algebraic functions. Analytic functions that are not differential algebraic are called transcendentally (or differentially) transcendental functions, e.g., Euler's Gamma function \cite{rubel1989survey}. Moreover, the set of differential algebraic functions is closed under addition, multiplication, composition, inversion, differentiation, and integration \cite{rubel1985differentially,ostrowski1920dirichletsche}. \qed
    \end{remark}
    %----------------------------------------------------------
    \begin{theorem}[Immersibility into rational systems \cite{ohtsuka2005model}] 
    \label{thm:rat_DA_imm} 
        Given a system \eqref{eqn:affine_control_sys} whose functions are all differential algebraic, i.e., 
            \begin{equation*}
                \label{eqn:DA_fncs_affine}
               g_{jk}(x), \, \forall j \in \mathbb{N}_m^0, \, k \in \mathbb{N}_n \; \text{ and } \; h_i(x), \; \forall i \in \mathbb{N}_p 
            \end{equation*}
        satisfy \eqref{eqn:ADE}. 
        Then there exists a finitely-generated rational field $\mathbb{R}(\psi_1(x),\ldots,\psi_{\tilde{n}}(x))$ with $\psi_1(x), \ldots, \psi_{\tilde{n}}(x)$ satisfying \eqref{eqn:ADE} that is equivalent to the differential field $\mathbb{R}\langle g_0(x), \ldots, g_m(x),h(x) \rangle_\nabla$. Moreover, the function \[\Psi(x) = [\psi_1(x),\ldots,\psi_{\tilde{n}}(x)]^\top\] defines a DA immersion (which is an invariant immersion) of \eqref{eqn:affine_control_sys} into a rational system \eqref{eqn:rat_sys}. \qed 
    \end{theorem}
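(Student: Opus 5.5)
The plan is to exploit finite transcendence degree. Let $K:=\mathbb{R}\langle g_0(x),\ldots,g_m(x),h(x)\rangle_\nabla$, read as a subfield of the field of meromorphic functions on (a connected component of) $\mathcal{D}$. We may adjoin the coordinate functions $x_1,\ldots,x_n$ to the generators, since they are trivially differential algebraic. This makes $\operatorname{D}\Psi$ and the $g_j$ expressible in the lifted state.

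\textbf{Step 1: finite transcendence degree.} I will show that $K$ has finite transcendence degree over $\mathbb{R}$. Fix a generator $\alpha$ and a variable $x_i$. The algebraic differential equation \eqref{eqn:ADE}, after passing to a minimal-order, minimal-degree relation, expresses $\partial^{\rho_i}\alpha/\partial x_i^{\rho_i}$ as algebraic over $\mathbb{R}(x,\alpha,\ldots,\partial^{\rho_i-1}\alpha/\partial x_i^{\rho_i-1})$. Differentiating this relation shows that all higher pure $x_i$-derivatives are rational in the lower ones together with that algebraic element.

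The main obstacle is mixed partial derivatives. Definition~\ref{def:differential_algebraic_fncs} only constrains pure derivatives in each variable separately, so one must argue that the whole differential field is still finitely generated. For this I would rely on the known result of Rubel/Ostrowski cited in Remark~\ref{rmk:DA_fncs}: a multivariate analytic function satisfying an ADE in each variable has the property that all its partial derivatives lie in a field of finite transcendence degree. Closure under differentiation, multiplication, etc.\ then gives that the differential field generated by finitely many such functions has finite transcendence degree $d$.

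\textbf{Step 2: choosing the generators.} Choose a transcendence basis and a primitive element. By the primitive element theorem (characteristic zero), $K$ is a finite algebraic extension of $\mathbb{R}(\theta_1,\ldots,\theta_d)$ generated by a single element, so $K$ is finitely generated as a field. I will select generators $\psi_1,\ldots,\psi_{\tilde n}\in K$ that include $x_1,\ldots,x_n$, all components of $g_j$ and $h$, and enough derivatives that $\partial\psi_k/\partial x_l\in\mathbb{R}(\psi)$ for all $k,l$. This is possible because $K=\mathbb{R}(\psi)$ and $K$ is closed under $\nabla$. Each $\psi_k$ is differential algebraic by the closure properties, so it satisfies \eqref{eqn:ADE}. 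This gives the claimed equality $\mathbb{R}(\psi)=K$.

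\textbf{Step 3: the rational system.} Set $\Psi=\operatorname{col}(\psi_1,\ldots,\psi_{\tilde n})$ and define the target system by
\begin{align*}
\tilde g_j(\tilde x):=P(\tilde x)\,\xi_j(\tilde x),\qquad \tilde h(\tilde x):=\eta(\tilde x).
\end{align*}
Here $P\in\mathbb{R}(\tilde x)^{\tilde n\times n}$ is a rational representation of $\operatorname{D}\Psi$, and $\xi_j,\eta$ are rational representations of $g_j,h$. The representation of $g_j$ is in fact a coordinate projection, since the components of $g_j$ are among the $\psi_k$.

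Since these rational expressions agree with $\operatorname{D}\Psi(x)g_j(x)$ and $h(x)$ on $\tilde x=\Psi(x)$, computation \eqref{eqn:inv_imm_dyn} shows that $\Psi(x(t))$ solves the target system from $\Psi(x_0)$. Uniqueness of solutions then gives output equality. Hence $\Psi$ is an invariant immersion (Definition~\ref{def:inv_imm}), and it is DA in the sense of Definition~\ref{def:DA_immmersion}.

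A minor caveat is that denominators must not vanish on $\Psi(\mathcal{D})$. I would handle this by restricting $\widetilde{\mathcal{D}}$ to an open neighborhood of $\Psi(\mathcal{D})$ avoiding the zero sets, using the fact that the denominators, evaluated at $\Psi(x)$, represent nonzero analytic functions. If this fails, I would shrink $\mathcal{D}$.
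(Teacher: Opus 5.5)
The paper gives no proof of this theorem. It is imported from \cite{ohtsuka2005model}, and only Theorem~\ref{thm:poly_imm} is proved, so your argument has to stand on its own. Step~3 is fine, modulo your denominator caveat: a nonzero analytic function can still have zeros, so shrinking $\mathcal{D}$ may really be necessary. Step~2, however, has a genuine gap exactly where the content of the theorem lies, namely finite generation.

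Finite transcendence degree $d$ only says that $K$ is \emph{algebraic} over $\mathbb{R}(\theta_1,\ldots,\theta_d)$. It does not give $[K:\mathbb{R}(\theta)]<\infty$, and the primitive element theorem presupposes that finiteness, so it cannot deliver it. The implication fails even for differential fields. On $t>0$, the field $\mathbb{R}(t,\sqrt{t+1},\sqrt{t+2},\ldots)$ is closed under $\mathrm{d}/\mathrm{d}t$ and has transcendence degree $1$, yet it is not finitely generated. What rules this out for $K$ is that $K$ is \emph{differentially} generated by finitely many elements, and Step~2 never uses this.

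The repair is the separant argument you sketched in Step~1 for pure derivatives, and it also removes your ``main obstacle''. For a generator $\alpha$ and each $i$, take $p_i$ in \eqref{eqn:ADE} of minimal order $\rho_i$, and among those of minimal degree in $\partial_i^{\rho_i}\alpha$. Then its separant $S_i$ does not vanish identically along $\alpha$. Now suppose $\beta_i>\rho_i$ and set $\gamma:=\beta-\rho_i e_i\neq 0$, with $e_i$ the $i$-th unit multi-index. Applying $\partial^\gamma$ to $p_i=0$ yields $S_i\,\partial^\beta\alpha+Q=0$, where $Q$ is polynomial in $x$ and in derivatives of $\alpha$ of total order $<|\beta|$. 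Induction on $|\beta|$, mixed partials included, puts every derivative of $\alpha$ into $\mathbb{R}(x)(\partial^\beta\alpha:\beta_i\le\rho_i\ \forall i)$. Hence $L:=\mathbb{R}(x)\langle g_0,\ldots,g_m,h\rangle_\nabla$ is finitely generated, with no appeal to Rubel/Ostrowski.

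A second, smaller problem is that adjoining $x_1,\ldots,x_n$ changes the field. You end up proving $\mathbb{R}(\psi)=L$, which in general strictly contains $\mathbb{R}\langle g_0,\ldots,g_m,h\rangle_\nabla$; in the rolling-coin example, $x_4\notin\mathbb{R}(x_5,x_6,\cos x_4,\sin x_4)$. So you obtain the lifted statement of Lemma~\ref{lemma:lifted_rational_immersion} rather than the asserted equality. The adjunction is not needed for Step~3 either, since closure of $K=\mathbb{R}(\psi)$ under $\nabla$ already gives $\partial\psi_k/\partial x_l\in\mathbb{R}(\psi)$.

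Instead, keep $x$ only in the ambient field $L$. Then use that an intermediate field $\mathbb{R}\subseteq K\subseteq L$ of a finitely generated extension is finitely generated. Indeed, if $t$ is a transcendence basis of $K$ and $t\cup s$ one of $L$, then $[K:\mathbb{R}(t)]=[K(s):\mathbb{R}(t,s)]\le[L:\mathbb{R}(t,s)]<\infty$. Each $\psi_k\in L$ is then differential algebraic. Since $L$ has finite transcendence degree over $\mathbb{R}(x)$, the elements $\psi_k,\partial_i\psi_k,\partial_i^2\psi_k,\ldots$ are algebraically dependent over $\mathbb{R}(x)$, which is exactly \eqref{eqn:ADE}.
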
 
    %----------------------------------------------------------
    The next result shows that a rational system can always be immersed into a polynomial system.

    \begin{theorem}[\mbox{Immersibility into polynomial systems~\cite{ohtsuka2005model}}]
    \label{thm:poly_imm}
        Let $\Psi(x) = [\psi_1(x),\ldots,\psi_{\tilde{n}}(x)]^\top$ define an immersion of \eqref{eqn:affine_control_sys} into a rational system \eqref{eqn:rat_sys}. Then, $\Psi(x)$ can always be extended to define an immersion of \eqref{eqn:affine_control_sys} into a polynomial system \eqref{eqn:poly_sys}. \qed
    \end{theorem}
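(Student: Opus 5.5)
The plan is to clear denominators by adding new coordinates for the reciprocals of the denominators. Since $\Psi$ is a rational (DA) immersion, the target vector fields $\tilde g_j(\tilde x)$ and the output $\tilde h(\tilde x)$ lie in $\mathbb{R}(\tilde x)$. They are evaluated along the invariant manifold $\tilde x=\Psi(x)$.

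\textbf{Step 1: collect denominators.} Write every component of $\tilde g_0,\ldots,\tilde g_m,\tilde h$ over a common denominator. Let $q(\tilde x)\in\mathbb{R}[\tilde x]$ be the product of all the (finitely many) distinct denominators. It suffices to introduce one extra coordinate
\[
\psi_{\tilde n+1}(x):=\frac{1}{q(\Psi(x))}.
\]
This is analytic on the open set $\{x\in\mathcal{D}: q(\Psi(x))\neq 0\}$. Analyticity of the original right-hand side guarantees the rational expressions are well defined on $\Psi(\mathcal{D})$. If needed, we restrict $\mathcal{D}$ to the dense open set where $q\circ\Psi\neq0$. Set $\Psi^e:=\operatorname{col}(\Psi,\psi_{\tilde n+1})$ and $w:=\tilde x_{\tilde n+1}$.

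\textbf{Step 2: check that the extended system is polynomial.} For the old coordinates, $\tilde g_j(\tilde x)=p_j(\tilde x)/q(\tilde x)$ with $p_j$ polynomial after multiplying out. This equals $p_j(\tilde x)\,w$ on the graph $w=1/q(\tilde x)$, which is polynomial in $(\tilde x,w)$. The same holds for $\tilde h$. For the new coordinate, use invariance (the identities \eqref{eqn:inv_imm_dyn}):
\[
\tfrac{d}{dt}w=-\frac{\nabla q(\tilde x)^\top\dot{\tilde x}}{q(\tilde x)^2}=-w^2\,\nabla q(\tilde x)^\top\Big(\sum_j p_j(\tilde x)\,w\,u_j\Big),
\]
with $u_0:=1$. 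This is again polynomial in $(\tilde x,w)$. So we define the polynomial vector fields
\[
\hat g_j(\tilde x,w)=\operatorname{col}\big(p_j w,\,-w^3\nabla q^\top p_j\big)
\]
and the output $\hat h=p_h w$. By construction $\hat g_j(\Psi^e(x))=\operatorname{D}\Psi^e(x)g_j(x)$ and $\hat h(\Psi^e(x))=h(x)$. Hence $\Psi^e$ is an invariant immersion in the sense of Definition~\ref{def:inv_imm}.

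\textbf{Step 3: confirm the immersion property.} Output equality for consistent initial data $\Psi^e(x_0)$ follows from the argument after Definition~\ref{def:inv_imm}. The curve $\Psi^e(x(t))$ solves the polynomial system. By uniqueness of solutions (polynomial right-hand sides are locally Lipschitz), it coincides with $\tilde x(t,\Psi^e(x_0),u)$. Consequently the outputs agree. The new component is also differential algebraic and rational in the old ones, so the DA structure is preserved.

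\textbf{Main obstacle.} The delicate point is the domain issue: $q\circ\Psi$ may vanish somewhere in $\mathcal{D}$. I would handle this by proving the statement on the open set where it is nonzero. Alternatively, I would argue that the denominators arising from a DA immersion, as in Theorem~\ref{thm:rat_DA_imm}, can be chosen nonvanishing on $\Psi(\mathcal{D})$, because the original functions are analytic there.
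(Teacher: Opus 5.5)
Your proof is correct and uses the same idea as the paper: adjoin reciprocals of the denominators as extra states, so that $\tilde g_j$ and $\tilde h$ become polynomial and the new states have polynomial dynamics via $\frac{d}{dt}(1/b)=-(1/b)^2\,\nabla b^\top\dot{\tilde x}$. The only difference is that you adjoin the single coordinate $1/q$ for the product $q$ of all denominators (one extra state, at the cost of higher degrees, e.g.\ the $w^3$ terms), whereas the paper adjoins one coordinate $1/b_{jk}$, $1/d_i$ per non-redundant denominator.

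Your ``main obstacle'' does not require shrinking $\mathcal D$. As the paper notes, Definition~\ref{def:immersion} already requires the rational target functions to be analytic on an open $\widetilde{\mathcal D}\supseteq\Psi(\mathcal D)$, so (with fractions in lowest terms) the denominators, and hence $q$, are nonzero on $\Psi(\mathcal D)$. It is this, not analyticity of the original right-hand side, that justifies the construction.
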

    For the sake of completeness, and since a detailed proof is not given in \cite{ohtsuka2005model}, we provide one here.
    \begin{proof}
         The functions defining \eqref{eqn:rat_sys} can be written in the form
        \begin{align*}
            \tilde{g}_{jk}(\tilde{x}) &= \dfrac{a_{jk}(\tilde{x})}{b_{jk}(\tilde{x})}, \; \forall j \in \mathbb{N}_m^0, \, k \in \mathbb{N}_{\tilde{n}} \\
            \tilde{h}_i(\tilde{x}) &= \dfrac{c_i(\tilde{x})}{d_i(\tilde{x})}, \; \forall i \in \mathbb{N}_p,
        \end{align*}
        where
        \[
            a_{jk}(\tilde{x}), \, c_i(\tilde{x}) \in \mathbb{R}[\tilde{x}] \text{ and } 
            b_{jk}(\tilde{x}), \, d_i(\tilde{x}) \in \mathbb{R}[\tilde{x}] \setminus \{0\}.
        \]
        It is worth noting that by definition of an immersion (Definition \ref{def:immersion}), there exists $\widetilde{\mathcal{D}} \supseteq \Psi(\mathcal{D})$ such that $b_{jk}(\tilde{x}), \, d_i(\tilde{x}) \neq 0, \, \forall \tilde{x} \in \widetilde{\mathcal{D}}$. Consider, for all $b_{jk}(\tilde{x}), d_i(\tilde{x}) \in \mathbb{R}[\tilde{x}] \setminus \mathbb{R}$, the set 
        \[
        \mathscr{S}_1 = \{ 1/b_{jk}(\tilde{x}), \, 1/d_i(\tilde{x}), \forall j \in \mathbb{N}_m^0, \, k \in \mathbb{N}_{\tilde{n}}, \, i \in \mathbb{N}_p\}.
        \]
        In practice, one removes any redundant elements in $\mathscr{S}_1$, i.e., by using the following subset instead
        \[
                \Big\{ \alpha \in \mathscr{S}_1 : 
                \forall s \in \mathscr{S}_1 \setminus \{\alpha\}, \, \alpha \neq \beta s^\nu,  \beta \in \mathbb{R}, \nu \in \mathbb{N}  
                \Big\}.
        \]
        Without loss of generality, we implicitly assume this step while sticking to the notation used in $\mathscr{S}_1$. 
        Define the map 
        \[  
            \widehat{\Psi}(\tilde{x}) :=
            \begin{bmatrix}
                 \tilde{x}^\top & \dfrac{1}{b_{jk}(\tilde{x})} & \dfrac{1}{d_{i}(\tilde{x})}
            \end{bmatrix}^\top = \hat{x}.
        \]
        Note that $\operatorname{dim}\hat{x} = \hat{n} = \tilde{n} + \kappa$ where $\kappa \leq \tilde{n}(m+1)+p$ denotes the number of functions/coordinates added besides $\tilde{x}$. It follows that $\tilde{g}_{jk}(\tilde{x}), \, \tilde{h}_i(\tilde{x}) \in \mathbb{R}[\hat{x}]$ or equivalently $\dot{\tilde{x}} \in \mathbb{R}[\hat{x}]^{\tilde{n}}, \, y \in \mathbb{R}[\hat{x}]^p$. 
        
        Thus, the dynamics of $\hat{x}$ read
        \[
            \dot{\hat{x}} =
            \begin{bmatrix}
                \dot{\tilde{x}}^\top & \dfrac{\mathrm{d}}{\mathrm{d}\hspace{0.05em}t}\left(\dfrac{1}{b_{jk}(\tilde{x})}\right) & \dfrac{\mathrm{d}}{\mathrm{d}\hspace{0.05em}t}\left(\dfrac{1}{d_{i}(\tilde{x})}\right)
            \end{bmatrix}^\top.
        \]
        The chain rule gives
        \[
        \dfrac{\mathrm{d}}{\mathrm{d}\hspace{0.05em}t}\left(\dfrac{1}{b_{jk}(\tilde{x})}\right) = \dfrac{\partial}{\partial{\tilde{x}}}\left(\dfrac{1}{b_{jk}(\tilde{x})}\right)\cdot \dot{\tilde{x}}.
        \]
        Notice that
        \[
            \dfrac{\partial{}}{\partial{\tilde{x}_k}} \left( \dfrac{1}{b_{jk}(\tilde{x})} \right) = -\dfrac{\partial{b}_{jk}(\tilde{x})/\partial{\tilde{x}}_k}{b_{jk}(\tilde{x})^2} \in \mathbb{R}[\hat{x}].
        \] 
        The same holds for $\dfrac{\mathrm{d}}{\mathrm{d}t} \left( \dfrac{1}{d_{i}(\tilde{x})} \right)$. Hence, it follows that 
        \begin{align*}
            \frac{\partial \widehat{\Psi}(\tilde{x})}{\partial \tilde{x}}
            &= \operatorname{col}\!\left(
                \mathbb{I}_{\tilde{n}},\;
                \frac{\partial}{\partial \tilde{x}}\!\left(\dfrac{1}{b_{jk}(\tilde{x})}\right),\;
                \frac{\partial}{\partial \tilde{x}}\!\left(\dfrac{1}{d_i(\tilde{x})}\right)
                \right) \\
            &= \widehat{P}(\hat{x}) \in \mathbb{R}[\hat{x}]^{\hat{n} \times \tilde{n}} \subset \mathbb{R}(\hat{x})^{\hat{n} \times \tilde{n}}. 
        \end{align*}
        Since $\tilde{x}$ is contained in $\hat{x}$ and $\tilde{g}_j(\tilde{x}) \in \mathbb{R}[\hat{x}]^{\tilde{n}}$, we can find functions $\hat{\xi}_j(\hat{x}) = \tilde{g}_j(\tilde{x})$. Hence, $\widehat{\Psi}$ satisfies Definition \ref{def:DA_immmersion} and yields a polynomial system. This finishes the proof. 
    \renewcommand{\qedsymbol}{$\blacksquare$}
    \end{proof}
    \renewcommand{\qedsymbol}{$\square$}
    
    \section{Structure-preserving Polynomial Immersion of Port-Hamiltonian Systems}
    \label{sec:main_results}
    Now, we shift our focus to preserving the pH structure under DA polynomial immersions. More precisely, we consider nonlinear port-Hamiltonian systems \eqref{eqn:pH_system} with a rational Hamiltonian.

    First, we introduce the notion of a lifted (DA) immersion, i.e., an immersion per Definition \ref{def:DA_immmersion} that contains the original state $x$. This extends the concept of  state lifting to systems with outputs. 
    
    \begin{definition}[Lifted immersion]
        \label{def:lifted_imm}
        A DA immersion of the form 
        \[ \Psi(x) = [x^\top \; \psi_1(x) \; \ldots \; \psi_r(x)]^\top,\, r \in \mathbb{N}, \] 
        where $\psi_1(x), \ldots, \psi_r(x)$ are DA functions, is called a \textit{lifted immersion}. \qed
    \end{definition}

    In contrast to immersions, the dimension of the target system obtained by lifted immersions is always strictly greater than that of the immersed system. 

    \begin{lemma}[Lifted immersions into rational systems]
    \label{lemma:lifted_rational_immersion}
	Given a system \eqref{eqn:affine_control_sys} whose functions are all differential algebraic, then there exists a lifted immersion of \eqref{eqn:affine_control_sys} into a rational system \eqref{eqn:rat_sys}. 
    \end{lemma}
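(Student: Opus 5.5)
The plan is to start from Theorem~\ref{thm:rat_DA_imm} and augment the immersion it provides with the original coordinates. By that theorem there is a DA immersion $\Psi_0(x) = [\psi_1(x),\ldots,\psi_{r}(x)]^\top$ of \eqref{eqn:affine_control_sys} into a rational system. Here $\psi_1,\ldots,\psi_r$ are differential algebraic functions generating a rational field equivalent to $\mathbb{R}\langle g_0(x),\ldots,g_m(x),h(x)\rangle_\nabla$. I then define the candidate lifted immersion
\[
\Psi(x) = [x^\top \; \psi_1(x)\;\ldots\;\psi_r(x)]^\top,
\]
and denote the lifted state by $\tilde{x} = (\tilde{x}_{x}, \tilde{x}_\psi)$. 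Each coordinate function $x_k$ is a polynomial, hence differential algebraic by Remark~\ref{rmk:DA_fncs}, so $\Psi$ has the form required in Definition~\ref{def:lifted_imm}. It remains to show that $\Psi$ is a DA immersion into a rational system.

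\textbf{Step 1: the vector fields and the output.} Since $\Psi$ is to be an invariant immersion, I set $\tilde{g}_j(\tilde{x}) := \operatorname{D}\Psi(x) g_j(x)$ and $\tilde{h}(\tilde{x}) := h(x)$, as in Definition~\ref{def:inv_imm}. Then \eqref{eqn:inv_imm_dyn} gives $\tilde{x}(t) = \Psi(x(t))$ along trajectories, so the outputs coincide and Definition~\ref{def:immersion} holds. The block structure of the vector fields is
\[
\tilde{g}_j = \begin{bmatrix} g_j(x) \\ \operatorname{D}\Psi_0(x) g_j(x)\end{bmatrix}.
\]
For the lower block, $\Psi_0$ already has these Lie derivatives in $\mathbb{R}(\tilde{x}_\psi)^r$. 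For the upper block, and for $h$, I use that the components of $g_j$ and $h$ are elements of the differential field, which equals $\mathbb{R}(\psi_1(x),\ldots,\psi_r(x))$. Hence they are rational in $\tilde{x}_\psi$, and in particular rational in $\tilde{x}$.

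\textbf{Step 2: the DA conditions of Definition~\ref{def:DA_immmersion}.} The Jacobian is $\operatorname{D}\Psi(x) = \operatorname{col}(\mathbb{I}_n, \operatorname{D}\Psi_0(x))$. The identity block is trivially rational. For the block $\operatorname{D}\Psi_0(x)$, each entry $\partial\psi_i/\partial x_k$ belongs to the differential field, since that field is closed under the derivations in $\nabla$. Consequently it lies in $\mathbb{R}(\psi(x))$, i.e., it is rational in $\tilde{x}$. The functions $\xi_j(\tilde{x}) = g_j(x)$ exist because $g_j(x)$ is rational in $\tilde{x}_\psi$, as shown in Step 1; alternatively one may simply take $\xi_j$ depending on $\tilde{x}_x = x$. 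Analyticity on a suitable open $\widetilde{\mathcal{D}} \supseteq \Psi(\mathcal{D})$ needs the denominators not to vanish there. This holds because these rational expressions equal analytic functions on $\mathcal{D}$ when evaluated at $\Psi(x)$, so $\widetilde{\mathcal{D}}$ can be chosen as a neighbourhood of $\Psi(\mathcal{D})$ on which the denominators are nonzero.

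\textbf{Main obstacle.} The delicate point is to justify that the differential field is exactly generated as a rational field by the $\psi_i$, including all partial derivatives of the $\psi_i$. This is what makes the Jacobian rational, and I take it directly from the field equivalence stated in Theorem~\ref{thm:rat_DA_imm}. A secondary subtlety is the choice of $\widetilde{\mathcal{D}}$ avoiding the zero sets of the denominators. A rational representative that is well defined on $\Psi(\mathcal{D})$ is guaranteed there, and I take the open neighbourhood of $\Psi(\mathcal{D})$ on which it stays well defined.
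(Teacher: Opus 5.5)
This is essentially the paper's proof: you append $x$ to the immersion from Theorem~\ref{thm:rat_DA_imm}, so the Jacobian becomes $\operatorname{col}(\mathbb{I}_n,\operatorname{D}\Psi_0)$, and the upper block $g_j$ (and $h$) is rational in the $\psi$'s by the identity $\mathbb{R}\langle g_0,\ldots,g_m,h\rangle_\nabla=\mathbb{R}(\psi_1,\ldots,\psi_r)$, exactly as the paper gets $\overline{P}=\operatorname{col}(\mathbb{I}_n,P)$ and rational $\bar\xi_j=\xi_j$. You should drop your claim that the denominators are nonzero near $\Psi(\mathcal{D})$ because the rational expressions agree with analytic functions there; this does not follow, since along $\Psi(x)=(x,\sin x)$ the analytic function $\sin x/x$ has the unique representative $\tilde x_2/\tilde x_1$ (as $x$ and $\sin x$ are algebraically independent), which is singular at $\Psi(0)$, and the paper's proof is silent on the domain, so either take nonvanishing denominators as an assumption, as the paper implicitly does, or restrict to the open subset of $\mathcal{D}$ where they do not vanish.
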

    \begin{proof}
			By Theorem \ref{thm:rat_DA_imm} we have that
            \[
            \mathbb{R} \langle g_0,\ldots,g_m, h \rangle_{\nabla} \equiv \mathbb{R}(\psi_1(x), \ldots, \psi_{\tilde{n}}(x))
            \]
            and the immersion is given by 
            \[
            \tilde{x}:= \Psi(x)= [\psi_1(x) \; \ldots \;  \psi_{\tilde{n}}(x)]^\top.
            \]
            The corresponding rational system reads
			\begin{align*}
				\dot{\tilde{x}} &= P(\tilde{x})(\xi_0(\tilde{x}) + \sum\limits_{i=1}^{m} \xi_i(\tilde{x}) u_i) \\
				y &= \tilde{h}(\tilde{x}),
			\end{align*}
		where 
            \begin{align*}
                 &P(\tilde{x}) \in \mathbb{R}(\bar{x})^{\tilde{n} \times n},\, \tilde{h}(\tilde{x}) \in \mathbb{R}(\tilde{x})^p, \\ &P(\tilde{x})\xi_j(\tilde{x}) \in \mathbb{R}(\tilde{x})^{\tilde{n}}, \; \forall j \in \mathbb{Z}^+_m.
            \end{align*}

            It follows that $\xi_j(\tilde{x}) \in \mathbb{R}(\tilde{x})^{\tilde{n}}$.
            Consider the map 
            \[
            \overline{\Psi}: \mathcal{D} \rightarrow \mathbb{R}^{n + \tilde{n}}, \; \overline{\Psi}(x) := [x^\top \; \tilde{x}^\top]^\top
            \]
            We now show that $\overline{\Psi}(x)$ is a lifted immersion leading to a rational system. Let $\bar{x}:= \overline{\Psi}(x)$ and  $\bar{h}(\bar{x}) := \tilde{h} \circ \bar{x}_{n+1:n+\tilde{n}} \in \mathbb{R}(\bar{x})^p$. Consider 
            \begin{equation}
                \label{eqn:pbar_xtilde}
				\overline{P}(\tilde{x}) := 
                        \begin{bmatrix}
					   \mathbb{I}_n \\
					   P(\tilde{x})
				    \end{bmatrix}, \quad  
                    \bar{\xi}_j(\tilde{x}) := 
					   \xi_j(\tilde{x}), \; \forall i\in \mathbb{N}_m^0. 
		\end{equation}
            By construction, it holds that $\overline{P}(\tilde{x}) \in \mathbb{R}(\tilde{x})^{(n + \ell) \times n}$ and $\bar{\xi}_i(\tilde{x}) \in \mathbb{R}(\tilde{x})^{n}$.
            Substituting $\tilde{x}$ with $\bar{x}_{n+1:n+\tilde{n}}$ in \eqref{eqn:pbar_xtilde}, we obtain 
            \[
                \overline{P}(\bar{x}) \in \mathbb{R}(\bar{x})^{(n + \tilde{n}) \times n} \text{ and } \bar{\xi}_i(\bar{x}) \in \mathbb{R}(\bar{x})^{n}.
            \]
            Notice that $\dot{\bar{x}}$ can be written as follows
            \begin{equation}
                \label{eqn:lifted_rat_sys}
                \begin{aligned}
                    \dot{\bar{x}} &= \dfrac{\partial{\overline{\Psi}}}{\partial{x}}(x) \left(g_0(x) + \sum\limits_{i=1}^{m}g_i(x)u_i \right) \\
                    &= \overline{P}(\bar{x}) \left( \bar{\xi}_0(\bar{x}) + \sum\limits_{i=1}^{m} \bar{\xi}_i(\bar{x})u_i \right) \in \mathbb{R}(\bar{x})^{\tilde{n} +n}. 
                \end{aligned}
            \end{equation} 
            This finishes the proof.
    \renewcommand{\qedsymbol}{$\blacksquare$}
    \end{proof}
    \renewcommand{\qedsymbol}{$\square$}

    \begin{remark}[Avoiding redundancy in Lemma \ref{lemma:lifted_rational_immersion}]
    \label{rmk:redundancy}
        In Lemma \ref{lemma:lifted_rational_immersion}, we extend the full state with the functions $\psi_1(x), \ldots, \psi_{\tilde{n}}(x)$. Note that any of the $\psi_i(x), \, i \in \mathbb{N}_{\tilde{n}}$, could also be identical to any of the original states $x_1, \ldots, x_n$. Of course, this creates redundancy as a number of states might be repeated in the lifted immersion $\overline{\Psi}(x)$, hence such functions should be eliminated from $\Psi(x)$. \qed
    \end{remark}

    The following result is a natural extension of Lemma \ref{lemma:lifted_rational_immersion} to the case of polynomial systems.

    \begin{lemma}[Lifted immersions into polynomial systems]
        \label{lemma:lifted_poly}
        Given a system \eqref{eqn:affine_control_sys} whose functions are all differential algebraic, then there exists a lifted immersion of \eqref{eqn:affine_control_sys} into a polynomial system \eqref{eqn:poly_sys}. \qed
    \end{lemma}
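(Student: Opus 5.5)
The plan is to compose Lemma~\ref{lemma:lifted_rational_immersion} with the construction in the proof of Theorem~\ref{thm:poly_imm}, and to check that the composition is again a lifted DA immersion. First, Lemma~\ref{lemma:lifted_rational_immersion} gives a lifted immersion $\overline{\Psi}(x) = [x^\top \; \tilde{x}^\top]^\top =: \bar{x}$ into the rational system \eqref{eqn:lifted_rat_sys}. Its vector fields $\overline{P}(\bar{x})\bar{\xi}_j(\bar{x})$ and its output $\bar{h}(\bar{x})$ lie in $\mathbb{R}(\bar{x})$. Write each component as $a/b$ with $a,b \in \mathbb{R}[\bar{x}]$. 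On $\overline{\Psi}(\mathcal{D})$ the denominators $b$ do not vanish, since the rational system is well defined along the image of the immersion.

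Next, I apply the construction from the proof of Theorem~\ref{thm:poly_imm} to this rational system. Collect the non-constant, non-redundant reciprocals $1/b(\bar{x})$ into a set $\mathscr{S}_1$, and define
\[
\widehat{\Psi}(x) := \left[ x^\top \; \tilde{x}^\top \; \tfrac{1}{b_1(\overline{\Psi}(x))} \; \ldots \; \tfrac{1}{b_\kappa(\overline{\Psi}(x))} \right]^\top = \hat{x}.
\]
Because $x$ remains the leading block, $\widehat{\Psi}$ has the form required by Definition~\ref{def:lifted_imm}. Each added coordinate is $1/b \circ \overline{\Psi}$, where $b$ is a polynomial and the $\psi_i$ are DA functions. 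By the closure properties in Remark~\ref{rmk:DA_fncs} (multiplication, addition, composition, inversion), each added coordinate is therefore DA. It is analytic on $\mathcal{D}$ because $b \circ \overline{\Psi} \neq 0$ there.

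It then remains to verify Definitions~\ref{def:inv_imm} and \ref{def:DA_immmersion}. For the DA property, as in the proof of Theorem~\ref{thm:poly_imm}, the derivative $\partial(1/b)/\partial \bar{x} = -\nabla b/b^2$ lies in $\mathbb{R}[\hat{x}]$. The Jacobian $\operatorname{D}\widehat{\Psi}(x) = \partial\widehat{\Psi}/\partial\bar{x} \cdot \operatorname{D}\overline{\Psi}(x)$ is thus a product of matrices that are polynomial in $\hat{x}$ and rational in $\bar{x}$, so it lies in $\mathbb{R}(\hat{x})$. Moreover, $g_j(x) = \bar{\xi}_j(\bar{x})$ is expressible through $\hat{x}$. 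For polynomiality, the target vector fields are
\[
\hat{g}_j(\hat{x}) = \begin{bmatrix} \overline{P}\bar{\xi}_j \\ -\tfrac{\nabla b^\top}{b^2}\,\overline{P}\bar{\xi}_j \end{bmatrix},
\]
and the output is $\hat{h}(\hat{x}) = \bar{h}$. Every denominator appearing in $\overline{P}\bar{\xi}_j$ and $\bar{h}$ is one of the $b$'s, or a power or constant multiple of one, and its reciprocal is a coordinate of $\hat{x}$. Hence all of these functions lie in $\mathbb{R}[\hat{x}]$. Invariance follows directly from the chain rule: $\tfrac{d}{dt}\widehat{\Psi}(x(t)) = \operatorname{D}\widehat{\Psi}(x)\,\dot{x}$. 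With the initial condition $\hat{x}_0 = \widehat{\Psi}(x_0)$, the output is therefore reproduced.

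The main obstacle is bookkeeping rather than conceptual. The delicate points are: showing that the reciprocals of the denominators capture every denominator that actually appears after simplification; handling the redundancy removal, where a coordinate of the form $\beta s^\nu$ can be expressed polynomially through the retained $s$; and confirming that no new denominators arise from differentiating the added coordinates, which only produce powers of the same $b$. I would also note Remark~\ref{rmk:redundancy}: components of $\tilde{x}$ that coincide with original states are dropped, which does not affect the argument.
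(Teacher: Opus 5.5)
Your proposal is correct and follows the route the paper indicates. The paper omits the proof as ``a straightforward application of Theorem~\ref{thm:poly_imm} to Lemma~\ref{lemma:lifted_rational_immersion}''; you do exactly this, adjoining the reciprocals of the denominators to the lifted rational immersion $\overline{\Psi}$ and noting that the leading $x$-block is retained, so the result is still a lifted DA immersion into a polynomial system, and your extra checks (DA-ness of the new coordinates, polynomiality of $-\nabla b^\top/b^2\,\dot{\bar{x}}$, invariance via the chain rule) fill in details the paper leaves implicit.
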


    The proof is a straightforward application of Theorem \ref{thm:poly_imm} to Lemma \ref{lemma:lifted_rational_immersion} and thus omitted. 

    The technical assumptions for our main result are as follows.
    
    \begin{assumption}[Rational Hamiltonian]
        \label{assump:rat_Ham}
        The Hamiltonian for \eqref{eqn:pH_system} is rational in $x$, i.e., $H(x) \in \mathbb{R}(x)$. \qed
    \end{assumption}

    \begin{assumption}[Differential algebraic functions]
        \label{assump:DA_fncs_pH}
        For a port-Hamiltonian system \eqref{eqn:pH_system}, the functions 
        \[
        J_{ij}(x), \, R_{ij}(x), \, g_{kj}(x)
        \] 
        for all $i,j \in \mathbb{N}_n, \, k \in \mathbb{N}_m^0$
        are differential algebraic, i.e., they satisfy \eqref{eqn:ADE}. \qed
    \end{assumption}

    \begin{theorem}[Rational representation of pH systems]
    \label{thm:main_result}
        If the port-Hamiltonian system \eqref{eqn:pH_system} satisfies Assumptions~\ref{assump:rat_Ham} and \ref{assump:DA_fncs_pH}, then there exists a system defined on an open set $\overline{\mathcal{D}} \subseteq \mathbb{R}^{\bar{n}}, \; \overline{\Psi}(\mathcal{D}) \subseteq \overline{\mathcal{D}}$, with $\bar{n}>n$ of the form
        \vspace{-0.2em}
        \begin{subequations}
            \label{eqn:rational_rep_pH}
            \begin{align}
            \dot{\bar{x}} &= (\mathscr{J}(\bar{x}) - \mathscr{R}(\bar{x}))\nabla \overline{H}(\bar{x}) + \sum\limits_{i=1}^{m}\lambda_i(\bar{x}) u_i \label{eqn:rational_rep_pH_ODE} \\
            \bar{y} &= \operatorname{col}(\lambda_1^\top(\bar{x}),\ldots,\lambda_m^\top(\bar{x}))\nabla \overline{H}(\bar{x})
        \end{align}
        \end{subequations}
        such that
    \begin{enumerate}[label=(\roman*), itemsep=0pt]
        \item Lifting: $\bar{x}= \overline{\Psi}(x)$ is a lifted state of $x$.
        \item Invariance: The set $\overline{\Psi}(\mathcal{D})$ defines an invariant embedded submanifold of $\mathbb{R}^{\bar{n}}$.
        \item Immersion: For any $x \in \mathcal{D}$ there exists a corresponding $\bar{x}$ such that $\bar{y} = y$.
        \item Rational structure: The obtained system is a rational system in $\bar{x}$.
        \item Interconnection geometry preservation: The matrix $\mathscr{J}(\bar{x}) = \operatorname{diag}(\bar{J}(\bar{x}), 0)$ is skew-symmetric with $\overline{J}(\bar{x}) = J(x)$.
        \item Energy preservation: For all $x \in \mathcal{D}$, there exists a corresponding $\bar{x}$ such that $\overline{H}(\bar{x}) = H(x)$.
        \item Preservation of dissipation:
        The matrix $\mathscr{R}(\bar{x})$ is symmetric and satisfies for all $\bar{x} \in \overline{\Psi}(\mathcal{D})$ and $x \in \mathcal{D}$ the following
        \[
            (\nabla \overline{H}(\bar{x}))^\top \mathscr{R}(\bar{x}) \nabla \overline{H}(\bar{x}) = (\nabla H(x))^\top R(x) \nabla H(x)\ge 0.
        \]
           \qed
    \end{enumerate} 
    \end{theorem}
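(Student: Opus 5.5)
We will build the target system from Lemma~\ref{lemma:lifted_rational_immersion}, applied to the pH system written as \eqref{eqn:affine_control_sys} with $g_0(x)=(J(x)-R(x))\nabla H(x)$ and output $h(x)=\operatorname{col}(g_i^\top(x))\nabla H(x)$. By Assumption~\ref{assump:rat_Ham}, $\nabla H$ is rational, hence differential algebraic. By Assumption~\ref{assump:DA_fncs_pH} and the closure properties in Remark~\ref{rmk:DA_fncs}, $g_0,\ldots,g_m,h$ are all differential algebraic.

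Lemma~\ref{lemma:lifted_rational_immersion} therefore gives a lifted immersion $\overline{\Psi}(x)=[x^\top\;\tilde{x}^\top]^\top$ with $\bar n=n+\tilde n>n$, together with rational $\xi_j$ such that $\xi_j(\tilde{x})=g_j(x)$. Because $J$, $R$, $g_i$ lie in the differential field generated by the system functions, which Theorem~\ref{thm:rat_DA_imm} identifies with $\mathbb{R}(\tilde{x})$, we can enlarge $\Psi$ if needed by adding the generators of the $J_{ij},R_{ij},g_{ki}$. This yields rational functions $\bar J(\bar x)=J(x)$, $\bar R(\bar x)=R(x)$ and $\bar g_i(\bar x)=g_i(x)$ on $\overline{\Psi}(\mathcal{D})$.

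We take $\overline{H}(\bar x):=H(\bar x_{1:n})$, which is rational and depends only on the first $n$ coordinates. Then $\nabla\overline H(\bar x)=[\nabla H(x)^\top\;0]^\top$ on the range, which gives (vi). We set $\mathscr J=\operatorname{diag}(\bar J,0)$; it is skew-symmetric with $\bar J(\bar x)=J(x)$, which gives (v).

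The inputs are handled by choosing $\lambda_i(\bar x):=\overline P(\bar x)\bar g_i(\bar x)=\operatorname{col}(\bar g_i,P\bar g_i)$. Since $\nabla\overline H$ vanishes in the last $\tilde n$ components, $\lambda_i^\top\nabla\overline H=\bar g_i^\top\nabla H=g_i^\top\nabla H$, so $\bar y=y$ along $\bar x=\overline\Psi(x)$, which gives (iii).

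The main obstacle is the drift term, because the lifted drift $\overline P(J-R)\nabla H$ must be written as $(\mathscr J-\mathscr R)\nabla\overline H$ with $\mathscr R$ symmetric, while $\nabla\overline H$ only sees the first block. The approach we try first is a block ansatz
\[
\mathscr R(\bar x)=\begin{bmatrix}\bar R & -(P(\bar J-\bar R))^\top\\ -P(\bar J-\bar R) & 0\end{bmatrix}.
\]
It is symmetric, and $(\mathscr J-\mathscr R)\nabla\overline H=\operatorname{col}\bigl((\bar J-\bar R)\nabla H,\;P(\bar J-\bar R)\nabla H\bigr)=\overline P(\bar J-\bar R)\nabla H$, which is exactly the lifted drift from \eqref{eqn:lifted_rat_sys}. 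The quadratic form is $\nabla\overline H^\top\mathscr R\nabla\overline H=\nabla H^\top\bar R\nabla H=\nabla H^\top R\nabla H\ge0$. This gives (vii), in the restricted sense of the statement that equality holds only on $\overline\Psi(\mathcal D)$, not positive semidefiniteness of $\mathscr R$ everywhere.

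The rational structure (iv) then holds because every block is rational in $\bar x$. Invariance (ii) follows from the invariance argument after Definition~\ref{def:inv_imm}, namely $\bar x(t)=\overline\Psi(x(t))$. Since $\overline\Psi$ is the graph map $x\mapsto(x,\Psi(x))$, it is an injective immersion with continuous inverse given by projection onto the first $n$ coordinates, so $\overline\Psi(\mathcal D)$ is an embedded submanifold. Finally, (i) holds by construction.
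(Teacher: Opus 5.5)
Your proposal is correct and follows essentially the same route as the paper. It uses the lifted immersion from Lemma~\ref{lemma:lifted_rational_immersion}, $\overline H=H\circ\bar x_{1:n}$, $\mathscr J=\operatorname{diag}(\overline J,0)$, the same off-diagonal block $-P(\overline J-\overline R)=P(\overline R-\overline J)=\Lambda$ in $\mathscr R$, the same $\lambda_i=\operatorname{col}(\bar\xi_i,P\bar\xi_i)$, and the graph-of-a-smooth-map argument for embeddedness.

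Your explicit enlargement of $\Psi$ by generators of the entries of $J$ and $R$ usefully tightens the paper, which simply asserts that rational $\overline J,\overline R$ exist. However, your stated reason, that these entries always lie in $\mathbb{R}\langle g_0,\ldots,g_m,h\rangle_\nabla$, is false in general: an entry $J_{ij}$ with $\partial H/\partial x_j\equiv 0$ never appears in $g_0$. So it is the enlargement itself, not that claim, that makes the step work.
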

        
    \begin{proof}
        The proof proceeds as follows: In Part 1, we first utilize Lemma \ref{lemma:lifted_rational_immersion} to construct a lifted immersion into a rational system, hence addressing (i)-(iv) of the result. The fact that $\Psi(\mathcal{D})$ defines an $n$-dimensional embedded submanifold of $\mathbb{R}^{\bar{n}}$ holds, since it is the graph of a smooth map over an open set (see, e.g., Proposition 5.4 in \cite{lee2003smooth}). In Part 2, by preserving the Hamiltonian and lifting the internal interconnection matrix of the immersed system into the higher dimensional space, we obtain the structure obtained in \eqref{eqn:rational_rep_pH_ODE}. This also addresses (v) and the energy preservation property mentioned in (vi). In Part 3, we then show that (vii) follows. Finally, in Part 4, we show that the output can be represented in the pH form \eqref{eqn:ODE_output}.
        
        \paragraph{Part 1} Since differential algebraic functions are closed under multiplication, addition, and subtraction (see Remark \ref{rmk:DA_fncs}), Assumption \ref{assump:DA_fncs_pH} implies that the functions
        \[
            g_0(x):= (J(x)-R(x)) \nabla H(x) \text{ and } g_i(x), \, i \in \mathbb{N}_m
        \]
        are also differential algebraic. By Lemma \ref{lemma:lifted_rational_immersion}, we can construct a lifted immersion for \eqref{eqn:pH_system} into a rational system of the form \eqref{eqn:lifted_rat_sys} with
        \begin{align*}
            g_0(x) &= \bar{\xi}_0(\bar{x}) \in \mathbb{R}(\bar{x})^{n} \\
            g_i(x) &= \bar{\xi}_i(\bar{x}) \in \mathbb{R}(\bar{x})^{n}, \; i \in \mathbb{N}_m.
        \end{align*}
        This addresses (i-iv).
        
         \paragraph{Part 2} From Assumption \ref{assump:rat_Ham}, it follows that $\nabla H(x) \in \mathbb{R}(x)^n$. Replacing $x$ with $\bar{x}_{1:n}$, we obtain $\nabla_x H \circ \bar{x}_{1:n} \in \mathbb{R}(\bar{x})^n$. Thus, there exist matrices $\overline{J}(\bar{x}), \, \overline{R}(\bar{x}) \in \mathbb{R}(\bar{x})^{n \times n}$ such that $\overline{J}(\bar{x}) = J(x)$ and $\overline{R}(\bar{x}) = R(x)$. Hence, we have 
            \begin{equation}
                \label{eqn:main_res_rat}
                \dot{\bar{x}} = 
				\overline{P}(\bar{x})
                    \left( 
					(\overline{J}(\bar{x}) - \overline{R}(\bar{x})) \nabla_x H(\bar{x}) \\ 
				    + \sum\limits_{i=1}^{m} \bar{\xi}_i(\bar{x})u_i \right).
            \end{equation}
        Recall that, by construction, a lifted immersion contains the original state $x$. To preserve the energy of \eqref{eqn:pH_system} in the rational system, we define the Hamiltonian for \eqref{eqn:main_res_rat} as
        \[
            \overline{H}(\bar{x}) := H \circ \bar{x}_{1:n} \in \mathbb{R}(\bar{x}).
        \]
        This addresses (vi).
        This implies
        \[
            \nabla_{\bar{x}} \overline{H}(\bar{x}) = 
            \begin{bmatrix}
                \nabla_x H^\top \circ \bar{x}_{1:n} & 0^\top   
            \end{bmatrix}^\top.
        \]
        Rearranging \eqref{eqn:main_res_rat} we arrive at
            \begin{multline}
                \label{eqn:pseudo_pH}
                \dot{\bar{x}} = \Bigg(
                \underbrace{
                \begin{bmatrix}\,
                    \overline{J}(\bar{x}) & 0 \\
                    0 & 0
                \end{bmatrix}}_{\mathscr{J}(\bar{x})} - 
                    \underbrace{
                    \begin{bmatrix}\,
                    \overline{R}(\bar{x}) & \Lambda^\top(\bar{x}) \\
                    \Lambda(\bar{x}) & 0
                    \end{bmatrix}}_{\mathscr{R}(\bar{x})}
                \Bigg)
                \nabla \overline{H}(\bar{x}) \\ +
                \begin{bmatrix}
                    \sum\limits_{i=1}^{m} \bar{\xi}_i(\bar{x}) u_i \\
                    \sum\limits_{i=1}^{m} P(\bar{x}) \bar{\xi}_i(\bar{x}) u_i
                \end{bmatrix},
            \end{multline}  
            where 
            $\Lambda(\bar{x}) = {P}(\bar{x})(\overline{R}(\bar{x}) - \overline{J}(\bar{x}))$ and
            \[
                \lambda_i = 
                \begin{bmatrix}
                    \bar{\xi_i}(\bar{x}) \\
                    P(\bar{x}) \bar{\xi_i}(\bar{x})
                \end{bmatrix}, \; i \in \mathbb{N}_m.
            \] 
            
            Observe that the rational matrix $\mathscr{J}(\bar{x})$ is skew-symmetric, hence a valid internal interconnection matrix. In fact, it is the original internal interconnection matrix (in new coordinates) \say{lifted} into a higher dimensional space. This addresses (v).
            
            The matrix $\mathscr{R}(\bar{x})$ is symmetric by construction, however, it is not necessarily positive semi-definite even on $\overline{\Psi}(\mathcal{D})$. This is because a necessary condition for the positive semi-definiteness of $\mathscr{R}(\bar{x})$ to hold is that $\Lambda(\bar{x})=0$ for all $\bar{x} \in \overline{\mathcal{D}}$ which clearly would not hold unless the original system has no drift. Further investigation is out of scope and we refer to \cite{albert1969conditions,bekker1988positive}. It is worth noting that the $0$ block matrix in $\mathscr{R}(\bar{x})$ is not unique and might be replaced by any symmetric matrix without changing the result. Hence, \eqref{eqn:pseudo_pH} is not a unique representation. 
            
            \paragraph{Part 3} By direct calculation, it follows for all $x \in \mathcal{D}$: 
                \begin{equation*}
                    \nabla \overline{H}^\top(\bar{x}) \mathscr{R}(\bar{x}) \nabla \overline{H}(\bar{x})= \\ \nabla H^\top(x) R(x) \nabla H(x) \geq 0.
                \end{equation*} 
                Put differently, $\mathscr{R}(\bar{x})$ satisfies a dissipation inequality on $\overline{\Psi}(\mathcal{D})$. 
            Since $\overline{\Psi}(x)$ is a lifted immersion, which is injective by construction, hence
            \[
                \label{eqn:prop_Rtilde}
                    \nabla^\top \overline{H}(\bar{x}) \mathscr{R}(\bar{x}) \nabla \overline{H}(\bar{x}) \geq 0, \; \forall \bar{x} \in \overline{\Psi}(\mathcal{D}),
            \] 
            which corresponds to (vii).
            
           \paragraph{Part 4} It remains to prove that the natural output (in the sense of port-Hamiltonian systems) corresponding to \eqref{eqn:pseudo_pH} is identical to the output of the original system. Direct calculation yields  
             \begin{align*}
                \begin{bmatrix}
                    \bar{\xi}_1^\top(\bar{x}) & \bar{\xi}_1^\top(\bar{x})P^\top(\bar{x}) \\
                    \vdots & \vdots \\
                    \bar{\xi}_m^\top(\bar{x}) & \bar{\xi}_1^\top(\bar{x})P^\top(\bar{x})
                \end{bmatrix}^\top \nabla_{\bar{x}} \overline{H}(\bar{x}) &=  
                \begin{bmatrix}
                    \bar{\xi}_1^\top(\bar{x}) \\ 
                    \vdots \\
                    \bar{\xi}_m^\top(\bar{x})
                \end{bmatrix} 
                \nabla H (\bar{x}_{1:n}) \\
                &= 
                \begin{bmatrix}
                    g_1^\top(x) \\ \vdots \\ g_m^\top(x)
                \end{bmatrix} 
                \nabla H (x) = y. 
			\end{align*} 
            This finishes the proof.
    \renewcommand{\qedsymbol}{$\blacksquare$}
    \end{proof}
    \renewcommand{\qedsymbol}{$\square$}
    
    \begin{lemma}[Passivity preservation] 
            The rational port-Hamiltonian system \eqref{eqn:pseudo_pH} is passive on $\overline{\Psi}(\mathcal{D})$ with storage function $\overline{H}(\bar{x})$.
    \end{lemma}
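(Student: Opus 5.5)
The plan is to verify the dissipation inequality $\frac{\mathrm{d}}{\mathrm{d}t}\overline{H}(\bar{x}(t)) \le \bar{y}^\top(t) u(t)$ directly along trajectories of \eqref{eqn:pseudo_pH} that start in $\overline{\Psi}(\mathcal{D})$. For such trajectories the invariance property (ii) of Theorem~\ref{thm:main_result} gives $\bar{x}(t) = \overline{\Psi}(x(t))$, where $x(\cdot)$ solves \eqref{eqn:pH_system} with the same input and with $x_0$ satisfying $\overline{\Psi}(x_0)=\bar{x}_0$. Here injectivity of the lifted immersion matters: since the first $n$ components of $\overline{\Psi}$ are $x$ itself, $x_0$ is uniquely recovered as $\bar{x}_{0,1:n}$. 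Hence every pointwise identity established in the proof of Theorem~\ref{thm:main_result} on $\overline{\Psi}(\mathcal{D})$ may be used at each time $t$.

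The key step is the chain rule: $\frac{\mathrm{d}}{\mathrm{d}t}\overline{H}(\bar{x}) = \nabla\overline{H}^\top(\bar{x})\dot{\bar{x}}$. Substituting \eqref{eqn:pseudo_pH} produces three terms.
\begin{enumerate}[label=(\alph*)]
\item The term $\nabla\overline{H}^\top\mathscr{J}\nabla\overline{H}$ vanishes because $\mathscr{J}$ is skew-symmetric by (v).
\item The term $-\nabla\overline{H}^\top\mathscr{R}\nabla\overline{H}$ equals $-\nabla H^\top(x)R(x)\nabla H(x)\le 0$ by (vii) of Theorem~\ref{thm:main_result}, that is, by Part~3 of its proof. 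I would spell this out using the structure $\nabla\overline{H}=[\nabla_x H^\top\ 0^\top]^\top$. The off-diagonal blocks $\Lambda$ and the lower-right block then contribute nothing, leaving only $\overline{R}(\bar{x})=R(x)$.
\item The input term is $\sum_i \nabla\overline{H}^\top\lambda_i u_i$. By Part~4 of the proof this equals $\bar{y}^\top u$, and again only the first block $\bar{\xi}_i(\bar{x})=g_i(x)$ survives.
\end{enumerate}
Combining the three terms gives $\dot{\overline{H}} = -\nabla H^\top R\nabla H + \bar{y}^\top u \le \bar{y}^\top u$. Integrating over $[t_0,t]$ yields the integral form $\overline{H}(\bar{x}(t))-\overline{H}(\bar{x}_0)\le\int_{t_0}^t \bar{y}^\top u\,\mathrm{d}s$, which is valid for piecewise continuous $u$ since $\overline{H}\circ\bar{x}$ is absolutely continuous. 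Finally, $\overline{H}=H\circ\bar{x}_{1:n}\ge 0$ by (vi) and the nonnegativity of $H$, so $\overline{H}$ is an admissible storage function.

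The main obstacle is conceptual rather than computational. Because $\mathscr{R}$ is not positive semidefinite off $\overline{\Psi}(\mathcal{D})$, the claim holds only on this set. The argument therefore relies on trajectories remaining in $\overline{\Psi}(\mathcal{D})$, that is, on the invariance property (ii), which follows from the invariant immersion relation $\tilde{x}(t)=\Psi(x(t))$. The statement should be read as passivity restricted to this invariant set: the inequality holds for consistent initial conditions $\bar{x}_0\in\overline{\Psi}(\mathcal{D})$ and on the maximal existence interval of $x(\cdot)$.
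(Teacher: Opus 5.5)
Your proposal is correct and follows the paper's proof. You differentiate $\overline{H}$ along \eqref{eqn:pseudo_pH}, remove the $\mathscr{J}$-term by skew-symmetry, bound the $\mathscr{R}$-term via property (vii), and identify the input term as $\bar{y}^\top u$. Your extra points make explicit what the paper's short computation leaves implicit: invariance of $\overline{\Psi}(\mathcal{D})$ is what keeps trajectories where (vii) applies, the inequality also holds in integral form for piecewise continuous inputs, and $\overline{H}$ is nonnegative.
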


    \begin{proof} 
        \begin{align*}
            \dfrac{\mathrm{d} \overline{H}(\bar{x})}{\mathrm{d}t} &= \nabla \overline{H}^\top(\bar{x}) \; \dot{\bar{x}} \\
            & = \nabla \overline{H}^\top(\bar{x}) \mathscr{J}(\bar{x}) \nabla \overline{H}(\bar{x}) - \nabla \overline{H}^\top(\bar{x}) \mathscr{R}(\bar{x}) \nabla \overline{H}(\bar{x}) + \\ 
            & \qquad \qquad  \nabla^\top \overline{H}(\bar{x})[\lambda_1(\bar{x}) \; \ldots \; \lambda_m(\bar{x})]u. 
        \end{align*}
        The first term vanishes due to skew-symmetry yielding 
        \[
            \dfrac{d \overline{H}(\bar{x})}{dt} = -\nabla \overline{H}^\top(\bar{x}) \mathscr{R}(\bar{x}) \nabla \overline{H}(\bar{x}) + y^\top u \leq u^\top y. 
        \]
        This finishes the proof.
        \renewcommand{\qedsymbol}{$\blacksquare$}
    \end{proof}
    \renewcommand{\qedsymbol}{$\square$}

    \begin{theorem}[Polynomial representation of pH systems]
    \label{cor:main_corollary}
    The rational system \eqref{eqn:rational_rep_pH} obtained in Theorem \ref{thm:main_result} can be immersed into a polynomial system of the same structure.   \qed
    \end{theorem}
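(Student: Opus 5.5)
The plan is to apply the construction in the proof of Theorem~\ref{thm:poly_imm} to the rational system \eqref{eqn:rational_rep_pH} obtained in Theorem~\ref{thm:main_result}, and then regroup the resulting dynamics in the same pH-like form as in Part~2 of that proof.

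\textbf{Step 1: polynomial lift.} Write every entry of $\overline{\mathscr{J}}:=\mathscr{J}$, $\mathscr{R}$, $\nabla\overline{H}$ and $\lambda_i$ as a quotient of polynomials in $\bar{x}$. Collect the non-constant denominators, removing redundant powers as in the proof of Theorem~\ref{thm:poly_imm}. Denote them by $b_1(\bar{x}),\ldots,b_\kappa(\bar{x})$; these include the denominators of the rational Hamiltonian and of its gradient. Define
\[
\widehat{\Psi}(\bar{x}) := [\bar{x}^\top \; 1/b_1(\bar{x}) \; \ldots \; 1/b_\kappa(\bar{x})]^\top =: \hat{x}.
\]
The composition $\widehat{\Psi}\circ\overline{\Psi}$ is again a lifted immersion of \eqref{eqn:pH_system}, because it contains $x$ as its first $n$ coordinates. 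On $\overline{\Psi}(\mathcal{D})$ the denominators do not vanish, so there is an open $\widehat{\mathcal{D}}$ containing the image. By the argument of Theorem~\ref{thm:poly_imm}, every entry of $\mathscr{J},\mathscr{R},\nabla\overline{H},\lambda_i$ lies in $\mathbb{R}[\hat{x}]$. The Jacobian $\widehat{P}(\hat{x})=\operatorname{col}(\mathbb{I}_{\bar{n}},\partial(1/b_k)/\partial\bar{x})$ is also polynomial in $\hat{x}$, since $\partial(1/b_k)=-\partial b_k\cdot(1/b_k)^2$.

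\textbf{Step 2: restore the pH structure.} Define $\widehat{H}(\hat{x}):=\overline{H}(\hat{x}_{1:\bar{n}})$. Because $\overline{H}$ depends only on $x$, $\widehat{H}$ is a rational function of $\hat{x}$. Its gradient has the form $[\nabla\overline{H}^\top\;0]^\top$, and its nonzero block is polynomial in $\hat{x}$. Proceeding exactly as in Part~2 of the proof of Theorem~\ref{thm:main_result}, set
\begin{align*}
\widehat{\mathscr{J}}(\hat{x}) &= \operatorname{diag}(\mathscr{J},0),\\
\widehat{\mathscr{R}}(\hat{x}) &= \begin{bmatrix}\mathscr{R} & \widehat{\Lambda}^\top\\ \widehat{\Lambda} & 0\end{bmatrix},\qquad \widehat{\Lambda}=\widehat{P}_2(\mathscr{R}-\mathscr{J}),\\
\hat{\lambda}_i &= \begin{bmatrix}\lambda_i\\ \widehat{P}_2\lambda_i\end{bmatrix},
\end{align*}
where $\widehat{P}_2$ denotes the lower block of $\widehat{P}$. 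A direct calculation then gives the following:
\begin{enumerate}[label=(\alph*), itemsep=0pt]
\item $\dot{\hat{x}}=(\widehat{\mathscr{J}}-\widehat{\mathscr{R}})\nabla\widehat{H}+\sum_i\hat{\lambda}_iu_i$.
\item $\widehat{\mathscr{J}}$ is skew-symmetric and extends $J$.
\item $\nabla\widehat{H}^\top\widehat{\mathscr{R}}\nabla\widehat{H}=\nabla H^\top R\nabla H\ge0$ on the image.
\item The natural output $\operatorname{col}(\hat{\lambda}_i^\top)\nabla\widehat{H}=\operatorname{col}(\lambda_i^\top)\nabla\overline{H}=y$.
\end{enumerate}
Invariance of $\widehat{\Psi}\circ\overline{\Psi}(\mathcal{D})$ follows as before, since this set is the graph of an analytic map over the open set $\mathcal{D}$.

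\textbf{Main obstacle.} The difficulty is the meaning of ``polynomial'' in the presence of a rational $\widehat{H}$. The vector field and the output are polynomial in $\hat{x}$. The storage $\widehat{H}$ itself, however, stays rational unless it is rewritten. I would first try to replace $\widehat{H}$ by an expression polynomial in the added coordinates. This can only work on the invariant manifold, and it would destroy the zero block of the gradient, so I expect not to do it. Instead I would state the result with $\widehat{H}=H$ preserved: the defining functions $\widehat{\mathscr{J}},\widehat{\mathscr{R}},\nabla\widehat{H},\hat{\lambda}_i$ are polynomial, and all structural properties hold on the invariant submanifold. I would verify that this matches the intended reading of ``same structure''.
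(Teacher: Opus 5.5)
Your proposal is correct and is essentially the paper's proof. That proof is a single remark: apply the reciprocal-denominator lift from the proof of Theorem~\ref{thm:poly_imm} to \eqref{eqn:pseudo_pH}, then regroup as in Part~2 of the proof of Theorem~\ref{thm:main_result}. Your explicit blocks $\widehat{\mathscr{J}}$, $\widehat{\mathscr{R}}$, $\hat{\lambda}_i$ fill in that remark, and your reading of ``same structure'' (polynomial vector fields and output, with the original rational Hamiltonian retained) matches the paper's intent.

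One point should be phrased more carefully. The rational $\widehat{H}$ does not have a polynomial gradient; $\nabla\widehat{H}$ becomes polynomial in $\hat{x}$ only after you substitute $\hat{x}_{\bar{n}+k}$ for $1/b_k$. So the pH form with the true gradient holds on the invariant set $\widehat{\Psi}(\overline{\Psi}(\mathcal{D}))$, not on all of $\widehat{\mathcal{D}}$. The paper leaves this restriction implicit as well.
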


    \begin{proof}
    The proof follows by applying the steps given in the proofs of Theorems \ref{thm:poly_imm} and \ref{thm:main_result} respectively to \eqref{eqn:pseudo_pH}. 
    \renewcommand{\qedsymbol}{$\blacksquare$}
    \end{proof}
    \renewcommand{\qedsymbol}{$\square$}
        
    \section{Illustrative Examples}
    We present two examples to elaborate on the result obtained in the previous section. The first example is a purely academic example, while the second example is the ubiquitous rolling coin example. 
    \label{sec:examples}
    \subsection{Example with Exponential Terms}
        Consider the following port-Hamiltonian system 
        \begin{equation}
            \label{eqn:pH_example}
            \begin{aligned}
            \dot{x} &= \big(J - R(x)\big)\nabla H(x) + g\,u, \\
            y &= g^\top \nabla H(x),
            \end{aligned}
            \end{equation}
            where $x \in \mathbb{R}^2, \; H(x) = \frac{1}{2}(x_1^2 + x_2^2)$ and
            \[
            J =
            \begin{bmatrix}
            0 & 1 \\[2pt]
            -1 & 0
            \end{bmatrix}, \quad
            R(x) =
            \begin{bmatrix}
            e^{x_1} & 0 \\[2pt]
            0 & e^{x_2}
            \end{bmatrix}, \quad
            g =
            \begin{bmatrix}
            1 \\[2pt]
            1
            \end{bmatrix}.
            \]
        Since the Hamiltonian is quadratic and all functions of \eqref{eqn:pH_example} are differential algebraic, then we can find a rational system of the form \eqref{eqn:rational_rep_pH}. The differential field corresponding to \eqref{eqn:pH_example} is 
        \[
        \mathbb{R} \langle x_2-x_1 e^{x_1}, -x_1-x_2 e^{x_2}, x_1+x_2 \rangle_\nabla
        \]
        which is identical to $\mathbb{R}(x_1,x_2,e^{x_1},e^{x_2})$ which in turn is finitely generated. Hence, we consider the following (lifted)\footnote{Note that the generators of $\mathbb{R}(x_1,x_2,\exp{x_1}, \exp{x_2})$ already contain the state $x$, hence the immersion obtained by Theorem \ref{thm:rat_DA_imm} is inherently a lifted immersion.} immersion
        \[
            \bar{x} := \overline{\Psi}(x) = [x_1 \;\;  x_2 \;\; e^{x_1} \;\; e^{x_2}]^\top, 
        \] 
        and the preserved Hamiltonian $\overline{H}(\bar{x}) := \frac{1}{2}(\bar{x}_1^2 + \bar{x}_2^2)$. Now the matrices in the proof of Theorem \ref{thm:main_result} are
        \[
        \overline{P}(\bar{x}) = \overline{R}(\bar{x}) =
        \begin{bmatrix}
            \bar{x}_3 & 0 \\
            0 & \bar{x}_4
        \end{bmatrix},
        \quad
        \Lambda(\bar{x}) = 
        \begin{bmatrix}
            \bar{x}_3^2 & -\bar{x}_3 \\
            \bar{x}_4 & \bar{x}_4^2
        \end{bmatrix}.
        \]
        It is also worth noting that $\overline{J} = J$ and $\bar{\xi}_1 = \bar{\xi}_2 = 1$. We immediately obtain the the following polynomial system 
    \begin{equation}
        \label{eqn:ph_barx}
        \begin{aligned}
        \dot{\bar x} &= \big(\mathscr{J}(\bar x)-\mathscr{R}(\bar x)\big)\,\nabla \overline H(\bar x) + \lambda(\bar x)\,u,\\
        y &= \lambda(\bar x)^\top \nabla \overline H(\bar x),
        \end{aligned}
    \end{equation}
    with $\overline H(\bar x)=\tfrac12\big(\bar{x}_1^2+\bar x_2^2\big)$, $\mathscr{J}(\bar{x}) = \operatorname{diag}(\overline{J}, 0_2)$,
    \[
    \mathscr{R}(\bar x)=
    \begin{bmatrix}
    \bar x_3 & 0 & \bar x_3^{2} & \bar x_4\\[2pt]
    0 & \bar x_4 & -\bar x_3 & \bar x_4^{2}\\[2pt]
    \bar x_3^{2} & -\bar x_3 & 0 & 0\\[2pt]
    \bar x_4 & \bar x_4^{2} & 0 & 0
    \end{bmatrix} \quad
    \text{ and } \quad
    \lambda(\bar x)=
    \begin{bmatrix}
    1\\[2pt]
    1\\[2pt]
    \tilde x_3\\[2pt]
    \tilde x_4
    \end{bmatrix}.
    \]
    The dissipated power for system \eqref{eqn:ph_barx} reads
    \[
    \nabla \overline{H}^\top(\bar{x}) \mathscr{R}(\bar{x}) \nabla \overline{H}(\bar{x}) = \bar{x}_1^2 \bar{x}_3 + \bar{x}_2^2 \bar{x}_4.
    \]
    Observe that, for $\bar{x} \in \overline{\Psi}(\mathbb{R}^2)$, this matches exactly the dissipated power for \eqref{eqn:pH_example} which reads
    \[
    \nabla H^\top(x) R(x) \nabla H(x) = x_1^2 e^{x_1} + x_2^2 e^{x_2} \geq0, \forall x \in \mathbb R^2.
    \]
    Thus, 
    \[
        \nabla \overline{H}^\top(\bar{x}) \mathscr{R}(\bar{x}) \nabla \overline{H}(\bar{x}) \geq 0, \; \forall \bar{x} \in \overline{\Psi}(\mathbb{R}^2)
    \]
    and passivity of \eqref{eqn:ph_barx} follows immediately on $\overline{\Psi}(\mathbb{R}^2)$. 
    
    \subsection{The rolling coin}
    \begin{figure}[htb]
        \centering
        \includegraphics[scale=0.7]{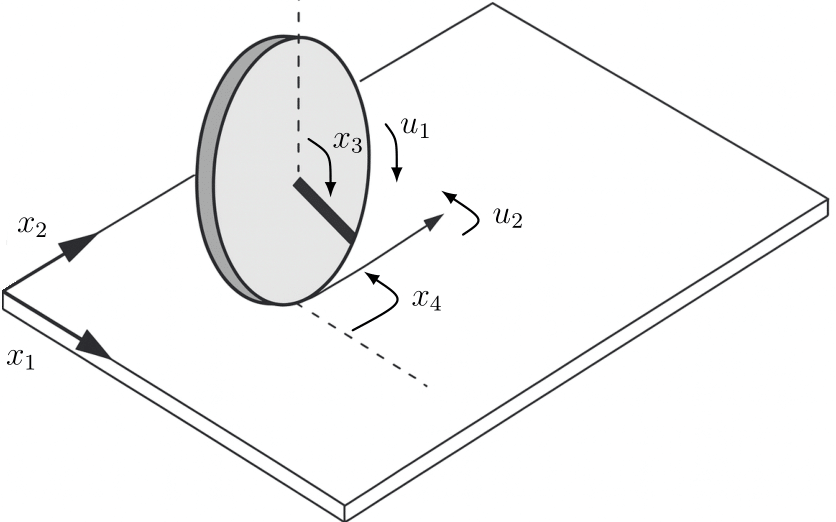}
        \caption{A schematic of the rolling coin.}
        \label{fig:rolling_coin}
    \end{figure}
    Consider the port-Hamiltonian model of a rolling coin on a horizontal plane (Figure \ref{fig:rolling_coin}) \cite{duindam2009modeling}
        \begin{equation}
            \begin{aligned}
            \label{eqn:pH_euro}
            \dot{x} &= J(x) \nabla H(x) + 
            G 
            \begin{bmatrix}
                u_1 \\
                u_2
            \end{bmatrix} \\
            y &= G^\top \nabla H(x) = 
            \begin{bmatrix}
                \frac{1}{2}x_6 & x_5
            \end{bmatrix}^\top,
            \end{aligned}
        \end{equation}
           where 
        \begin{align*}
        J(x) &= 
            \begin{bmatrix}
                0 & 0 & 0 & 0 & 0 & \cos{x_4} \\
                0 & 0 & 0 & 0 & 0 & \sin{x_4} \\
                0 & 0 & 0 & 0 & 0 & 1 \\
                0 & 0 & 0 & 0 & 1 & 0 \\
                0 & 0 & 0 & -1 & 0 & 0 \\
                -\cos{x_4} & -\sin{x_4} & -1 & 0 & 0 & 0 
            \end{bmatrix} \\
        G &= 
        \begin{bmatrix}
            g_1 & g_2 
        \end{bmatrix} = 
        \left[
            \begin{array}{cc}
            \multicolumn{2}{c}{0_{4\times 2}} \\
                0 & 1\\
                1 & 0
            \end{array}
        \right], \quad H(x) = \dfrac{1}{2}x_5^2 + \dfrac{1}{4} x_6^2. 
        \end{align*}
        
        The states $x_1, x_2$ denote the position of the point of contact of the coin with the $x_1x_2$-plane, $x_3$ denotes the angle of the marked segment on the coin, $x_4$ denotes the heading angle, and $x_5$ denotes the angular momenta with respect to $x_4$. The state $x_6$ is a nonlinear function involving the momenta with respect to $x_1,x_2,x_3$ and the heading angle $x_4$. The inputs $u_1,u_2$ represent the torques about the rolling axis and the vertical axis respectively. All constants of the model are assumed to be one.

        This example satisfies Assumptions \ref{assump:rat_Ham} and \ref{assump:DA_fncs_pH}. That is, respectively, $H(x)$ is polynomial (hence rational) in $x$ and all functions are differential algebraic functions satisfying $\eqref{eqn:ADE}$. We first construct a rational immersion according to Theorem \ref{thm:rat_DA_imm}. To this end, consider the differential field corresponding to \eqref{eqn:pH_euro}
        \[
            \mathbb{R}\langle x_6\cos{x_4}, x_6\sin{x_4}, x_5, x_6 \rangle_\nabla 
        \]
        which is identical to $\mathbb{R}(x_5,x_6,\cos{x}_4, \sin{x}_4)$
         as a non-differential field. The latter is finitely generated, then the function 
         \[
         \Psi(x) := 
         \begin{bmatrix}
             x_5 & x_6 & \cos{x}_4 & \sin{x}_4
         \end{bmatrix}^\top
         \]
         defines an immersion into a rational (in fact a polynomial) system. Removing redundant elements per Remark \ref{rmk:redundancy} we redefine $\Psi(x) := [\cos{x}_4 \; \sin{x}_4]^\top$. Consider the lifted immersion according to Lemma \ref{lemma:lifted_rational_immersion} which reads 
        \[
        \bar{x}:= \overline{\Psi}(x) = [x^\top \;\; \cos{x_4} \;\; \sin{x_4}]^\top.
        \]
       The preserved Hamiltonian from Theorem \ref{thm:main_result} reads
        \[
            \overline{H}(\bar{x}) = \dfrac{1}{2}\bar{x}_5^2 + \dfrac{1}{4} \bar{x}_6^2.
        \]
        We compute the Jacobian as
            \begin{align*}
                \operatorname{D}\Psi(x) 
                &= 
                \begin{bmatrix}
                    -\sin{x}_4 & 0_{1 \times 5} \\
                    \cos{x}_4  & 0_{1 \times 5}
                \end{bmatrix} \\
                &= 
                \begin{bmatrix}
                    -\bar{x}_8 & 0_{1 \times 5} \\
                    \bar{x}_7  & 0_{1 \times 5}
                \end{bmatrix} = P(\bar{x}) \in \mathbb{R}[\bar{x}]^{2\times6}.
            \end{align*}
        The internal interconnection matrix reads 
        \[
        \mathscr{J}(\bar{x}) =
        \begin{bmatrix}
            0 & 0 & 0 & 0 & 0 & \bar{x}_7 & 0 & 0 \\
            0 & 0 & 0 & 0 & 0 & \bar{x}_8 & 0 & 0 \\
            0 & 0 & 0 & 0 & 0 & 1 & 0 & 0 \\
            0 & 0 & 0 & 0 & 1 & 0 & 0 & 0 \\
            0 & 0 & 0 & -1 & 0 & 0 & 0 & 0 \\
            -\bar{x}_7 & -\bar{x}_8 & -1 & 0 & 0 & 0 & 0 & 0 \\
            0 & 0 & 0 & 0 & 0 & 0 & 0 & 0 \\
            0 & 0 & 0 & 0 & 0 & 0 & 0 & 0 
        \end{bmatrix}
        \]
        and the new control vector fields read 
        \[
            \bar{\xi}_1(\bar{x}) = [0_{1\times5} \;\; 1 \;\; 0_{1\times2}]^\top, \; \bar{\xi}_2(\bar{x}) = [0_{1\times4} \;\; 1 \;\; 0_{1\times3}]^\top.
        \] 
        The natural output corresponding to \eqref{eqn:pseudo_pH} reads 
        \[ 
            \begin{bmatrix}
               \bar{\xi}_1^\top(\bar{x}) & P(\bar{x})\bar{\xi}_1^\top(\bar{x}) \\ \bar{\xi}_2^\top(\bar{x}) & P(\bar{x})\bar{\xi}_2^\top(\bar{x})
            \end{bmatrix}^\top
            \nabla_{\bar{x}}{\overline{H}}(\bar{x}) =
            \begin{bmatrix}
                G_1^\top \nabla H(x) \\
                G_2^\top \nabla H(x)
            \end{bmatrix} = y.
        \]
        With $\overline{R}(\bar{x}) = 0_6$, we  compute $\Lambda(\bar{x})$ as
        \[
            \Lambda(\bar{x}) = 
            \begin{bmatrix}
                0_{1\times5} & \bar{x}_7 \bar{x}_8 \\
                0_{1\times5} & -\bar{x}_7^2
            \end{bmatrix}.
        \]
        The new resistive matrix reads 
        \[
        \mathscr{R}(\bar{x}) = 
            \begin{bmatrix}
                0_{6\times5} & 0_{6\times1} & 0_{6\times1} & 0_{6\times1} \\
                0_{1\times5} & 0 & \bar{x}_7\bar{x}_8 & -\bar{x}_7^2 \\
                0_{1\times5} & \bar{x}_7 \bar{x}_8 & 0 & 0 \\
                0_{1\times5} & -\bar{x}_7^2 & 0 & 0 \\
            \end{bmatrix}
        \]
        The set of eigenvalues for $\mathscr{R}(\bar{x})$ is 
        \[
            \sigma(\mathscr{R}(\bar{x})) = \left\{0,\bar{x}_7\sqrt{\bar{x}_7^2 + \bar{x}_8^2},-\bar{x}_7\sqrt{\bar{x}_7^2 + \bar{x}_8^2} \right\},
        \]
        where the zero has an algebraic multiplicity equal to $6$. The matrix $\mathscr{R}(\bar{x})$ is not positive semi-definite even on $\overline{\Psi}(\mathcal{D})$ as we have $\bar{x}_7, \bar{x}_8 \in [-1,1]$. Similar to \eqref{eqn:pH_euro}, the obtained polynomial  system is conservative on $\overline{\Psi}(\mathcal{D})$ (in fact on $\mathbb{R}^8$) since 
        \[
        \nabla^\top \overline{H}(\bar{x}) \mathscr{R}(\bar{x}) \nabla^\top \overline{H}(\bar{x}) = 0. 
        \]  
        The state and output trajectories of \eqref{eqn:pH_euro} and the polynomial port-Hamiltonian system are shown in Figure \ref{fig:euro_sim} for $u_i(t) = 0.5\sin(2t), \; i =1,2$, and $x_0 = 0_{6\times1}, \bar{x}_0 = \overline{\Psi}(x_0) = [0_{1\times6} \;\; 1 \;\; 0]^\top$.  
%---------------------------
        \begin{figure}
            \centering
            \includegraphics[scale=0.25]{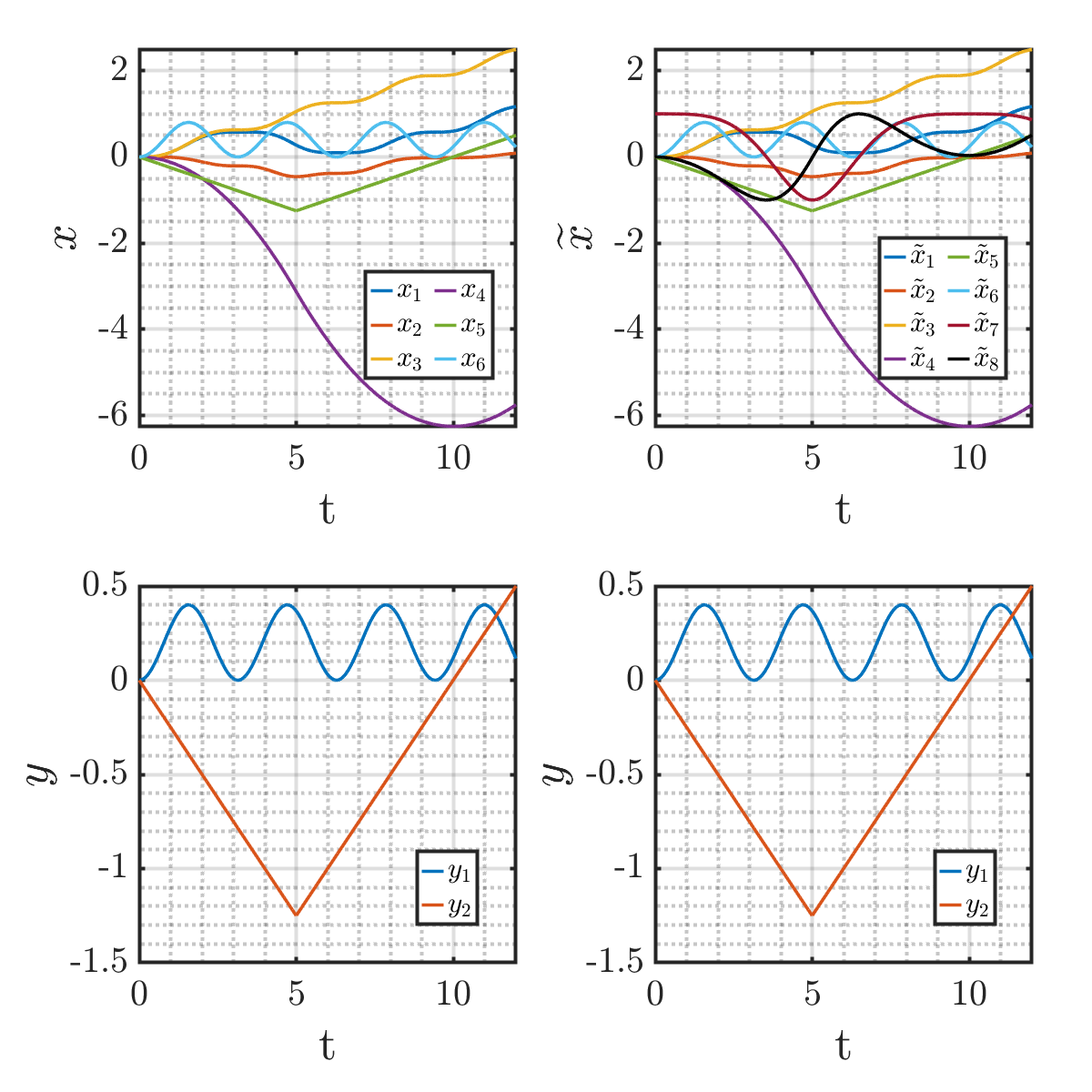}
            \caption{State and output trajectories: Left = system \eqref{eqn:pH_euro}, right = polynomial port-Hamiltonian system.}
            \label{fig:euro_sim}
        \end{figure}
%--------------------
\section{SOS-IDA-PBC via Polynomial Immersions} \label{sec:ida_sos}
In this section, we illustrate how polynomial immersions facilitate control design. In particular, we consider Interconnection and Damping Assignment Passivity-Based Control (IDA-PBC; \cite{ortega2002interconnection}), which is a well-established control technique for pH systems. In this method, the feedback law relies on the solution of partial differential equations called the matching equations. For the sake of brevity, we present here, in terms of an example, the idea of combining IDA-PBC with the computationally efficient SOS optimization. Generalization and rigorous analysis are subject of future work.

Consider the pH system
\begin{align*}
    \dot{x} &= -
    \begin{bmatrix}
        e^{x_1} & 0 \\
        0 & e^{x_2}
    \end{bmatrix}
    \begin{bmatrix}
        x_1 \\
        x_2
    \end{bmatrix} + 
    \begin{bmatrix}
        0 \\
        1
    \end{bmatrix}u \\
    y &= x_2,
\end{align*}
where the Hamiltonian of the system is $H(x)= \frac{1}{2}(x_1^2+x_2^2)$. We first construct an immersion according to Theorem \ref{thm:main_result}. We start by defining the following immersion 
\[
    \bar{x} := \overline{\Psi}(x) = 
    \begin{bmatrix}
        x^\top & e^{x_1} & e^{x_2}
    \end{bmatrix}^\top.
\]
Proceeding as in the proof of Theorem \ref{thm:main_result}, we obtain 
\begin{equation}
    \label{eqn:toy_example_target}
    \begin{aligned}
        \dot{\bar{x}} &= -
        \begin{bmatrix}
            \bar{x}_3   &  0            &  \bar{x}_3^2 & 0 \\
            0           &  \bar{x}_4    &  0           & \bar{x}_4^2 \\
            \bar{x}_3^2 &  0            &  0           & 0 \\
            0           &  \bar{x}_4^2  &  0           & 0
        \end{bmatrix}
        \begin{bmatrix}
            \bar{x}_1 \\
            \bar{x}_2 \\
            0 \\
            0
        \end{bmatrix} + 
        \begin{bmatrix}
            0 \\
            1 \\
            0 \\
            \bar{x}_4
        \end{bmatrix}u \\
        y &= \bar{x}_2.
    \end{aligned}
\end{equation}
%-----------------------------------------------
Here, we are interested in designing a stabilizing controller for \eqref{eqn:toy_example_target} via the IDA-PBC methodology \cite{ortega2002interconnection}. IDA-PBC relies on 
the so-called matching equations
\begin{multline}
    \label{eqn:IDA-PBC_poly}
    \lambda_i(\bar{x})^\perp (\mathscr{J}(\bar{x})-\mathscr{R}(\bar{x})) \nabla \overline{H}(\bar{x})  
    = \\ \lambda_i(\bar{x})^\perp (\mathscr{J}_d(\bar{x}) - \mathscr{R}_d(\bar{x})) \nabla \overline{H}_d(\bar{x}), \; i \in \mathbb{N}_m.
\end{multline}

The feedback satisfying \eqref{eqn:IDA-PBC_poly} is then computed
\[
    \label{eqn:IDA-PBC_control_poly}
    u = \lambda^\dagger(\bar{x})(\mathscr{F}_d(\bar{x}) \nabla \overline{H}_d(\bar{x}) - \mathscr{F}(\bar{x}) \nabla \overline{H}(\bar{x})),
\]
where 
    \begin{align*}
        \mathscr{F}(\bar{x}) = \mathscr{J}(\bar{x}) - \mathscr{R}(\bar{x}), \; \mathscr{F}_d(\bar{x}) = \mathscr{J}_d(\bar{x}) - \mathscr{R}_d(\bar{x}), \\
        \lambda^\dagger(\bar{x})= \left( 
        \begin{bmatrix}
            \lambda_1(\bar{x}) \\
            \vdots \\
            \lambda_m(\bar{x})
        \end{bmatrix}
        \begin{bmatrix}
            \lambda_1(\bar{x}) &
            \ldots &
            \lambda_m(\bar{x})
        \end{bmatrix} \right)^{-1}
        \begin{bmatrix}
            \lambda_1(\bar{x}) \\
            \vdots \\
            \lambda_m(\bar{x})
        \end{bmatrix}.
    \end{align*}

Here, we choose
\begin{align*}
    \mathscr{J}_d(\bar{x}) &= 0, \; \mathscr{R}_d(\bar{x}) = r\cdot \label{}\mathbb{I}_{\bar{n}}, \, r \in \mathbb{R}^+ \setminus\{0\}\\ 
    \lambda^\perp &= 
    \begin{bmatrix}
        1 & 0 & 0 & 0
    \end{bmatrix}.
\end{align*}
Solving \eqref{eqn:IDA-PBC_poly} for system \eqref{eqn:toy_example_target} yields the following desired Hamiltonian
\[
     \overline{H}_d(\bar{x}) = \dfrac{1}{2r}\bar{x}_1^2 \bar{x}_3 + \omega(\bar{x}_2,\bar{x}_3,\bar{x}_4),
\]
where $\omega$ is a function to be designed such that $\overline{H}_d(\bar{x})$ defines a Lyapunov function for the closed-loop system with respect to a desired set-point $x^{\hspace{100\mu}d} \in \mathbb{R}^{\bar{n}}$. We choose for $\omega$ a polynomial ansatz of degree $4$, i.e., 
\[
    \omega
= \sum_{i+j+k \le 4} a_{ijk} (\bar{x}_2 - \bar{x}^{\hspace{100\mu}d}_2)^{\,i} \, (\bar{x}_3 - \bar{x}^{\hspace{100\mu}d}_3)^{\,j} \, (\bar{x}_4 - \bar{x}^{\hspace{100\mu}d}_4)^{\,k}.
\] 
Since a Lyapunov function must equate to zero at $x^{\hspace{100\mu}d}$, this constrains $\bar{x}^{\hspace{100\mu}d}_1$ to be zero since $\bar{x}_3 \neq 0$ for all $\bar{x} \in \overline{\Psi}(\mathbb{R}^2)$. Hence, we set $\bar{x}^{\hspace{100\mu}d} = (0,\bar{x}^{\hspace{100\mu}d}_2, 1, e^{\bar{x}^{\hspace{100\mu}d}_2}) \in \overline{\Psi}(\mathbb{R}^2)$ for some $\bar{x}^{\hspace{100\mu}d}_2 \in \mathbb{R}$. 
 
To design $\omega$ we solve the SOS program 
\begin{subequations}\label{eqn:SOS_program}
\begin{align}
\underset{a_{ijk}}{\text{min}} \quad & 1 \\
\text{s.t.} \quad & \nonumber \\
    &\nabla \overline{H}_d(\bar{x}^{\hspace{100\mu}d}) = 0 \label{req:stationarity} \\
    &\overline{H}_d(\bar{x}^{\hspace{100\mu}d}) = 0 
    \label{req:Hd_zero} \\ 
    &\nu^\top \nabla^2 \overline{H}_d(\bar{x}) \nu - \nabla \, \nu^\top \nu \;\; \text{is SOS}. \label{req:convexity}
\end{align}
\end{subequations}
It can be shown that the third constraint ensures the strict convexity of $\overline{H}_d$. Hence, \eqref{req:convexity} combined with \eqref{req:stationarity} is sufficient for $x^{\hspace{100\mu}d}$ to be a global minimizer of $\overline{H}_d$. Moreover, all the constraints combined \eqref{req:stationarity}-\eqref{req:convexity} ensure the positive definiteness of $\overline{H}_d$.

Nevertheless, enforcing \eqref{req:convexity} globally leads to infeasibility of the optimization problem. Hence, we restrict this constraint to hold locally on a closed ball $\mathcal{B}_{\tilde{r}}(\bar{x}^{\hspace{100\mu}d}) \subset \overline{\Psi}(\mathbb{R}^2)$ of radius $\tilde{r}=5$ centered at $\bar{x}^{\hspace{100\mu}d}$. To this end, we define the following functions 
\begin{align*}
    \mu_0 &= \tilde{r}^2 - (\bar{x} - x^{\hspace{100\mu}d})^\top (\bar{x} - x^{\hspace{100\mu}d}) \\
    \mu_1 &= \bar{x}_{3} - \sum_{i=0}^{3} \dfrac{e^{\bar{x}^{\hspace{100\mu}d}_1}}{i!}(\bar{x}_1 - \bar{x}^{\hspace{100\mu}d}_1)^i \\
    \mu_{2} &= \bar{x}_{4} - \sum_{i=0}^{3} \dfrac{e^{\bar{x}^{\hspace{100\mu}d}_2}}{i!}(\bar{x}_2 - \bar{x}^{\hspace{100\mu}d}_2)^i
\end{align*}
where $\mu_1, \, \mu_2$ are, respectively, Taylor approximations of the constraints defining the embedded submanifold $\overline{\Psi}(\mathbb{R}^2) \subset \mathbb{R}^4$. These constraints read 
\[
    \bar{x}_3 - e^{\bar{x}_1} = \bar{x}_4 - e^{\bar{x}_2} = 0.
\]
Notice that $\mathcal{B}_{\tilde{r}}(\bar{x}^{\hspace{100 \mu}d})$ is approximated by the polynomial inequality $\mu_0 \geq0$ and the polynomial equalities $\mu_1 = \mu_2 = 0$. By an extension of the S-Procedure \cite{papachristodoulou2005tutorial}, the local strict convexity constraint reads
\begin{subequations}
\label{eq:sos_requirements}
\begin{align}
    \label{req:sos_req_pd_hess_loc}
    \nu^\top \nabla^2 \overline{H}_d(\bar{x}) \nu - \delta \, \nu^\top \nu - \sum_{i=0}^{2} s_i \, \mu_i \quad & \text{is SOS} \\
    \label{req:s0_sos}
    s_0 \quad & \text{is SOS}
\end{align}
\end{subequations}
where $s_\tau$ denotes the polynomial multipliers defined as follows 
\[
    s_\tau = \sum_{i+\ldots+\ell \le 3} \tilde{a}_{\tau, ijk\ell}\, \bar{x}_1^{\,i} \ldots \bar{x}_{\bar{n}}^{\,j}\, \nu_1^{\,k}\ldots \nu_{\bar{n}}^{\,\ell}, \; \forall \tau \in \mathbb{N}^0_{2}.
\]
Solving the new SOS program, with \eqref{eq:sos_requirements} in place of \eqref{req:convexity} and $x_{2}^{\hspace{100 \mu}d}= 4$, yields the following Hamiltonian
\begin{align*}
\overline{H}_d(\bar{x}) = \, & 103877.32 - 1002.31\bar{x}_2 - 767.95\bar{x}_3 - 5678.35\bar{x}_4 \\
& - 37.96\bar{x}_2^2 + 373.41\bar{x}_3^2 + 113.78\bar{x}_4^2 \\
& + 14.83\bar{x}_2\bar{x}_3 + 35.46\bar{x}_2\bar{x}_4 + 42.54\bar{x}_3\bar{x}_4 \\
& + 17.29\bar{x}_2^3 - 0.99\bar{x}_4^3 - 6.22\bar{x}_2^2\bar{x}_3 + 0.59\bar{x}_2^2\bar{x}_4 \dots \\
& + (\text{higher order terms}).
\end{align*}
It is worth noting that, in addition to the coefficients $a_{ijk}$, the coefficients of the multipliers, $\tilde{a}_{\tau,ijk\ell}\hspace{50 \mu}$, are also decision variables in the new SOS program.

The time evolutions of $\overline{H}_d$, the inputs, and the corresponding state trajectories for four different initial conditions are shown in Figure \ref{fig:H_u_states}. For the chosen set of initial conditions, the results clearly show the asymptotic stability of the desired set-point. Moreover, the Hamiltonian asymptotically converges to zero. 

\begin{figure}[tbp]
    \centering
    \begin{minipage}[b]{0.48\columnwidth}
        \centering
        \includegraphics[width=\textwidth]{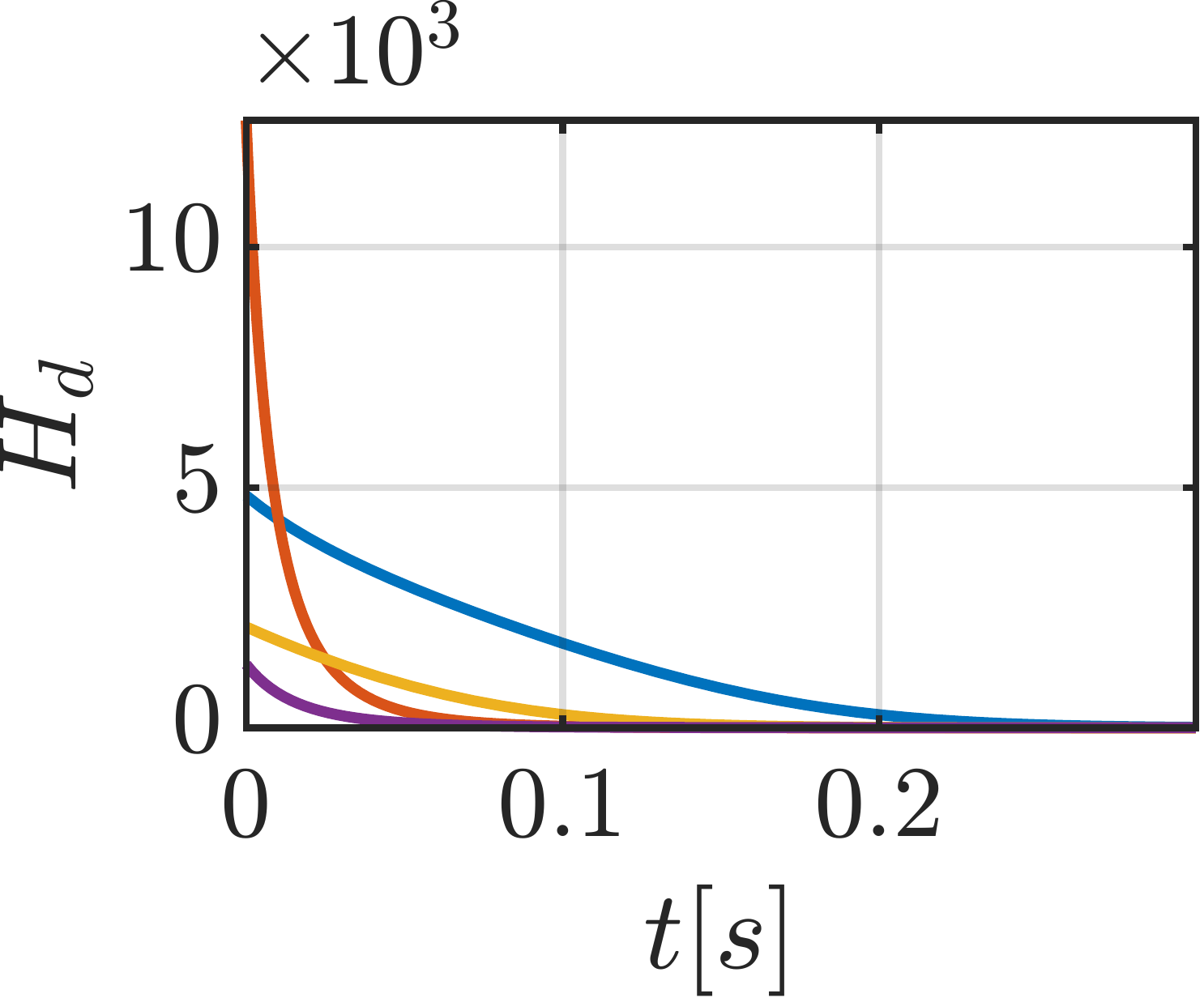}
    \end{minipage}
    \hfill
    \begin{minipage}[b]{0.48\columnwidth}
        \centering
        \includegraphics[width=\textwidth]{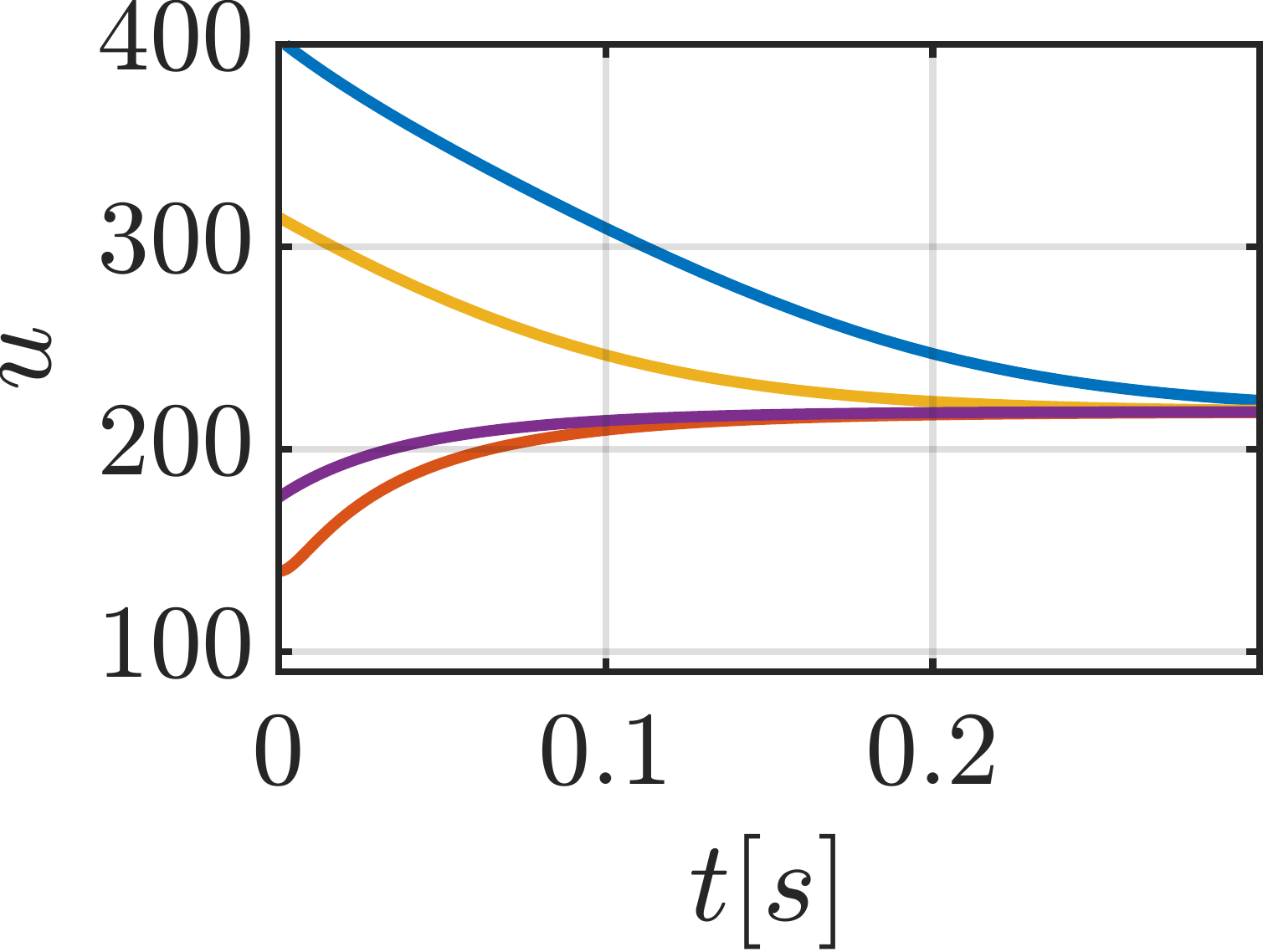}
    \end{minipage}
    \begin{minipage}[b]{0.48\columnwidth}
        \centering
        \includegraphics[width=\textwidth]{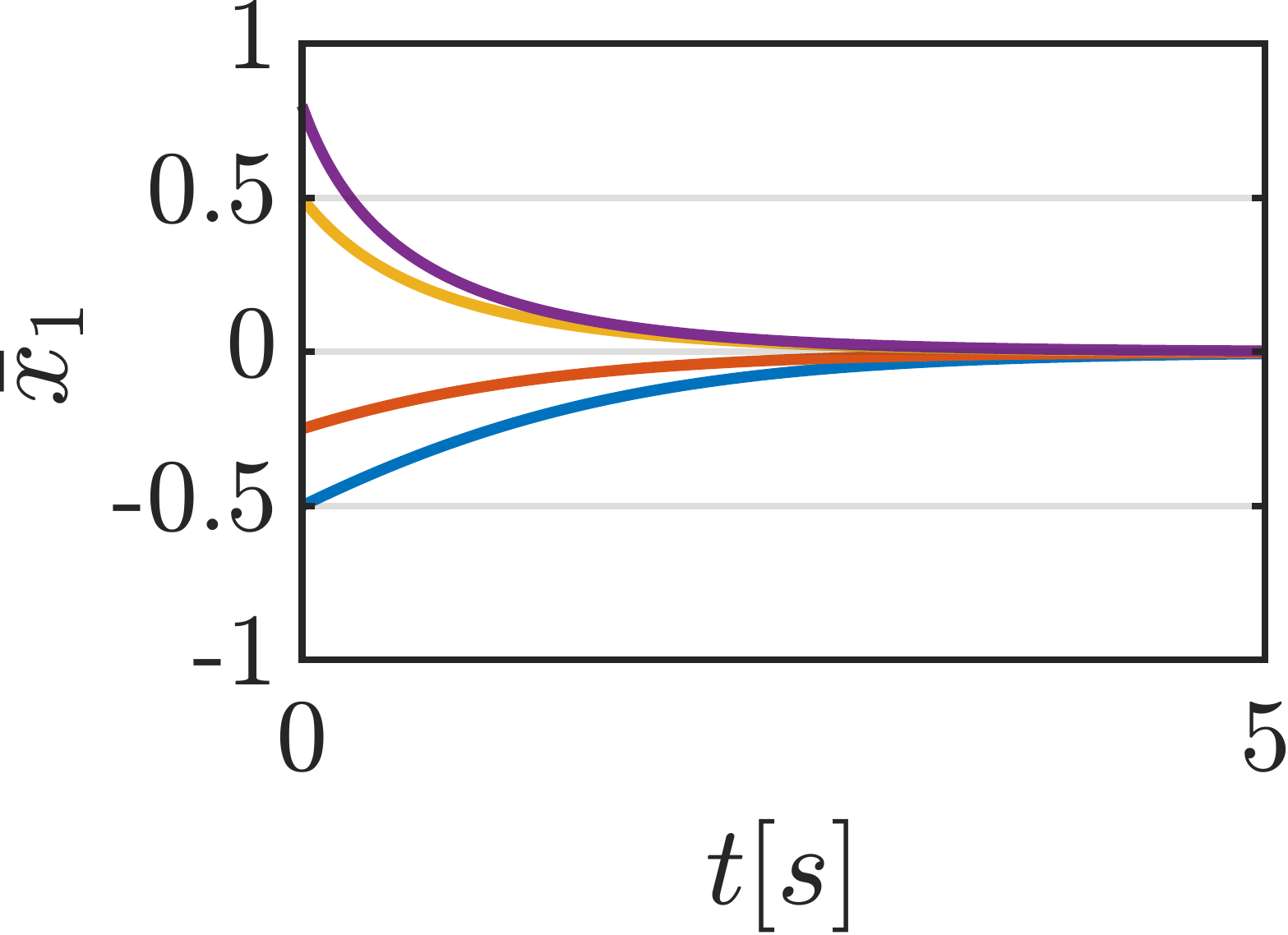}
    \end{minipage}
    \hfill
    \begin{minipage}[b]{0.48\columnwidth}
        \centering
        \includegraphics[width=\textwidth]{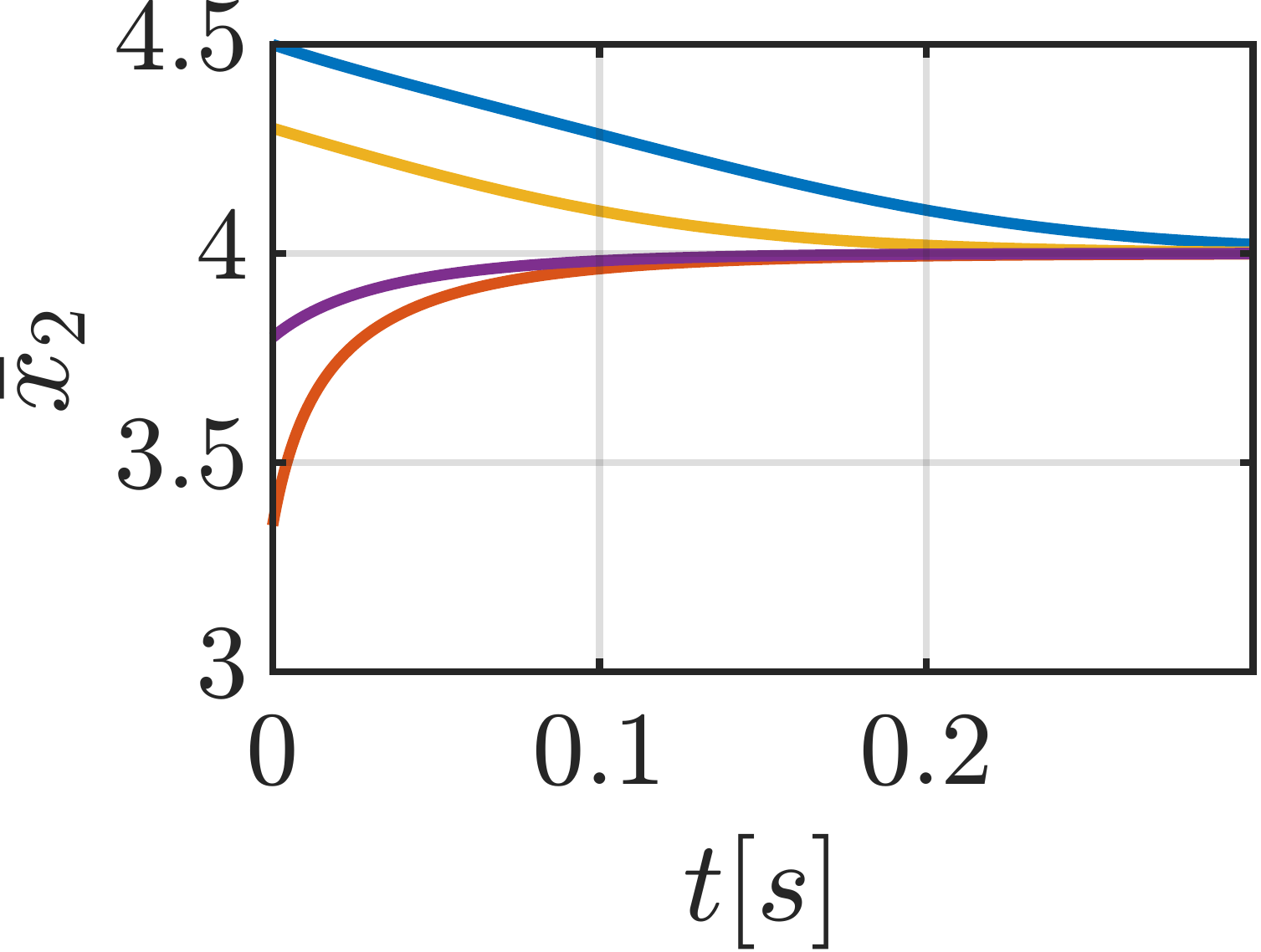}
    \end{minipage}
    \begin{minipage}[b]{0.48\columnwidth}
        \centering
        \includegraphics[width=\textwidth]{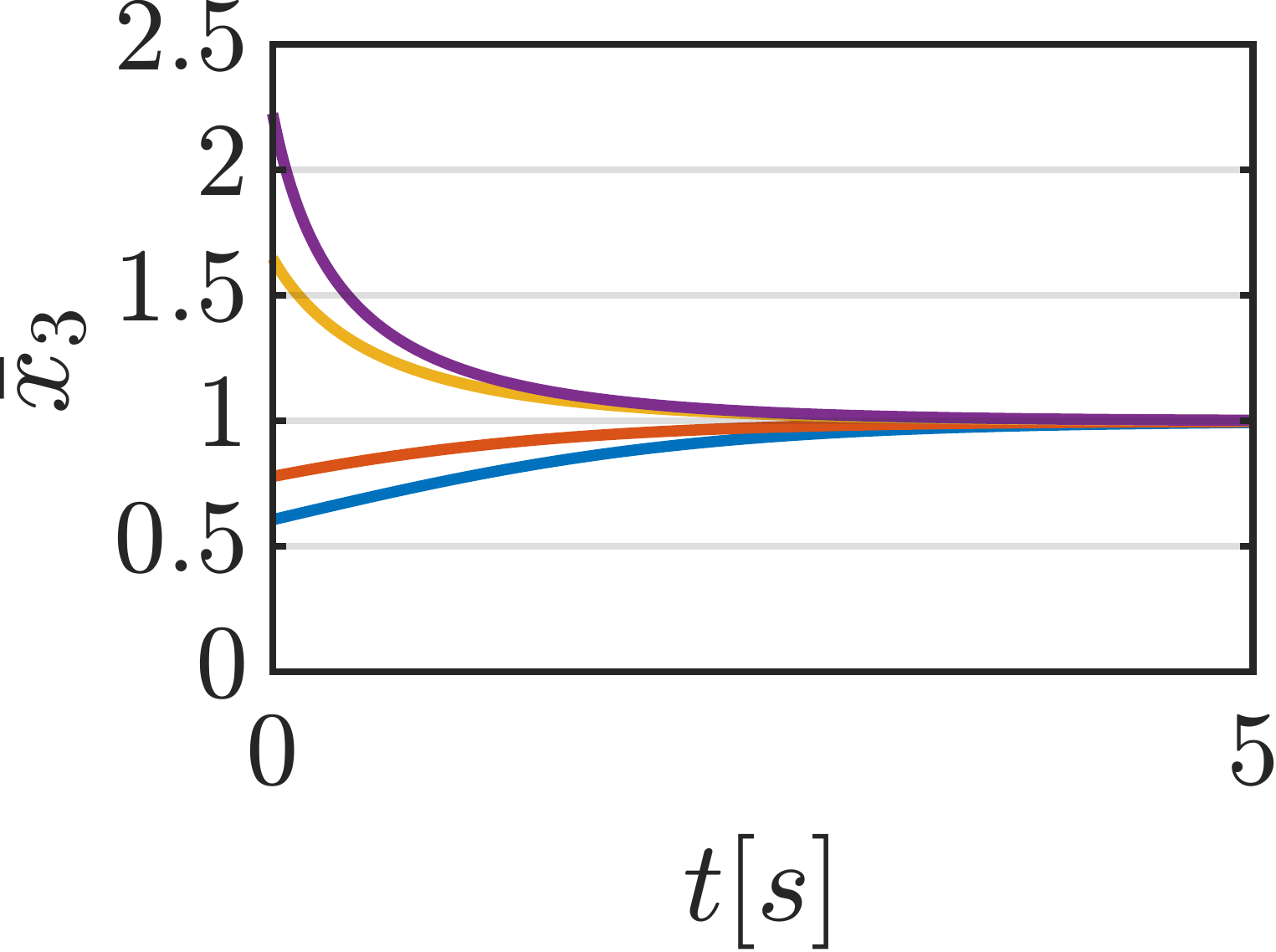}
    \end{minipage}
    \hfill
    \begin{minipage}[b]{0.48\columnwidth}
        \centering
        \includegraphics[width=\textwidth]{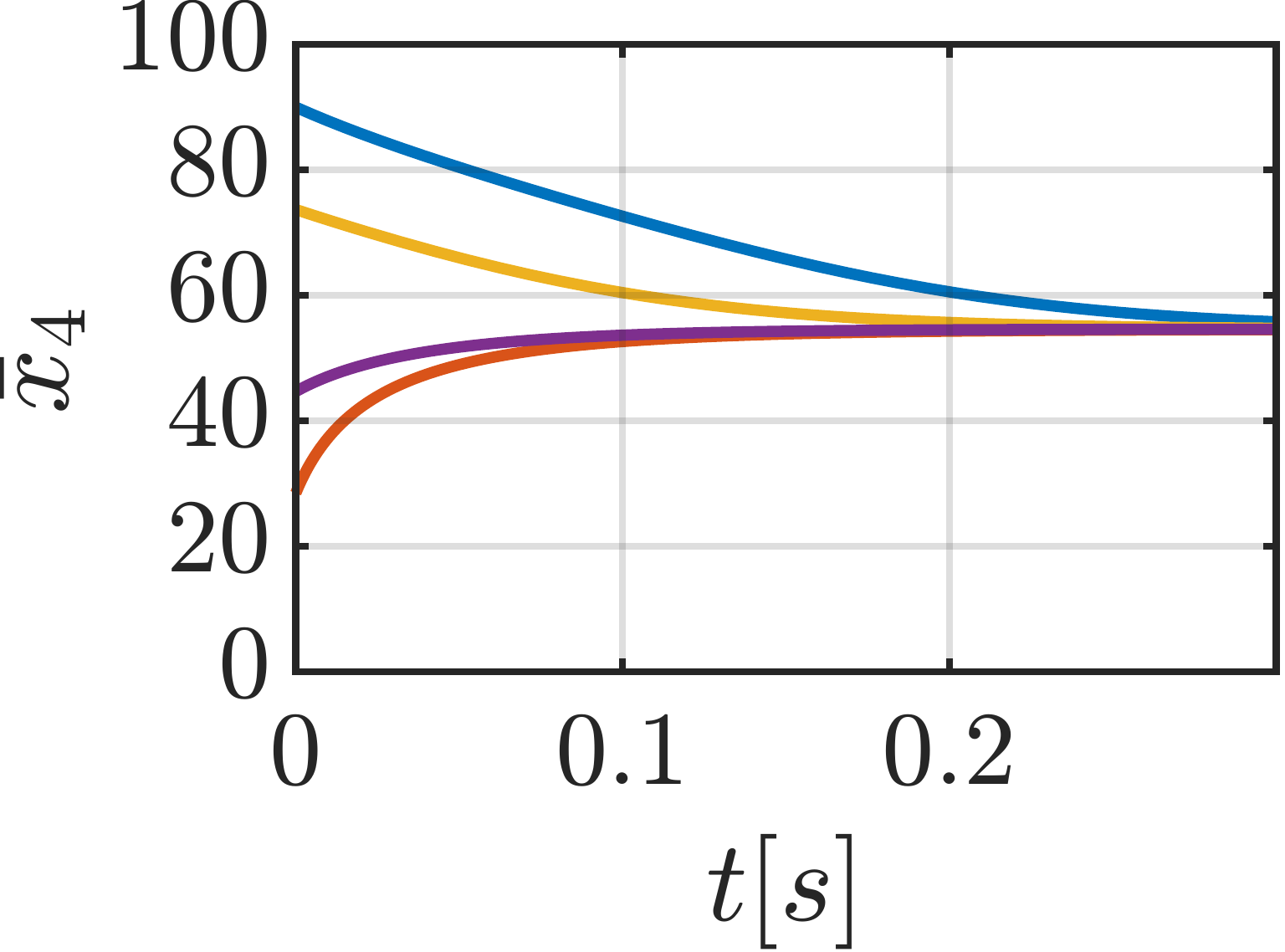}
    \end{minipage}

    \caption{Time evolution of the Hamiltonian $H_d$ (top left), the input $u$ (top right), and the states.}
    \label{fig:H_u_states}
\end{figure}
%--------------------
\section{Conclusion and Outlook}
\label{sec:conclusion} 
This paper proposed an approach that captures the behavior of non-polynomial port-Hamiltonian systems within a higher-dimensional polynomial system representation without sacrificing their core structure. Our approach guarantees the preservation of the interconnection geometry, the dissipated energy, and passivity on the range of the immersion. Moreover, for consistently initialized initial conditions, this subset is a forward invariant manifold under the obtained dynamics. Additionally, we briefly discussed a SOS control design that leverages the polynomial structure obtained via the immersion.

Future work will investigate further exploitation of polynomial immersions for control design and data-driven modeling. 
%--------------------
	\bibliographystyle{elsarticle-num-names}
	\bibliography{Polynomial_immersion_references}
\end{document}